\def\noheaderplainsetup{

\topmargin=0pt \headheight=0pt \headsep=0pt  \oddsidemargin=0pt \evensidemargin=0pt  \textheight=9.1truein \textwidth=6.5truein}   
\begin{document}

\newcommand{\Rank}{\mbox{Rank}}
\newcommand{\rank}{\mbox{\scriptsize Rank}}
\newcommand{\lll}{\mbox{{\bf CL16}}}
\newcommand{\mmm}{\mbox{{\bf CL17}}}
\newcommand{\Bigmlc}{\mbox{{\Large $\wedge$}}}
\newcommand{\Bigmld}{\mbox{{\Large $\vee$}}}
\newcommand{\bigmlc}{\mbox{{\large $\wedge$}}}
\newcommand{\bigmld}{\mbox{{\large $\vee$}}}
\newcommand{\bigleftbrace}{\mbox{{\large $\{$}}}
\newcommand{\bigrightbrace}{\mbox{{\large $\}$}}}
\newcommand{\Bigleftbrace}{\mbox{{\Large $\{$}}}
\newcommand{\Bigrightbrace}{\mbox{{\Large $\}$}}}
\newcommand{\emptyrun}{\langle\rangle}
\newcommand{\legal}[2]{\mbox{\bf Lr}^{#1}_{#2}} 
\newcommand{\win}[2]{\mbox{\bf Wn}^{#1}_{#2}} 
\newcommand{\seq}[1]{\langle #1 \rangle}           


\newcommand{\ade}{\mbox{\large $\sqcup$}}      
\newcommand{\ada}{\mbox{\large $\sqcap$}}      
\newcommand{\gneg}{\neg}                  
\newcommand{\mli}{\rightarrow}                     
\newcommand{\mld}{\vee}    
\newcommand{\mlc}{\wedge}  
\newcommand{\add}{\hspace{0pt}\sqcup}           
\newcommand{\adc}{\hspace{0pt}\sqcap}
\newcommand{\st}{\mbox{\raisebox{-0.05cm}{$\circ$}\hspace{-0.13cm}\raisebox{0.16cm}{\tiny $\mid$}\hspace{2pt}}}
\newcommand{\cost}{\mbox{\raisebox{0.12cm}{$\circ$}\hspace{-0.13cm}\raisebox{0.02cm}{\tiny $\mid$}\hspace{2pt}}}


\newtheorem{theoremm}{Theorem}[section]
\newtheorem{factt}[theoremm]{Fact}
\newtheorem{definitionn}[theoremm]{Definition}
\newtheorem{lemmaa}[theoremm]{Lemma}
\newtheorem{conventionn}[theoremm]{Convention}
\newtheorem{claimm}[theoremm]{Claim}
\newtheorem{corollaryy}[theoremm]{Corollary}
\newtheorem{examplee}[theoremm]{Example}
\newtheorem{remarkk}[theoremm]{Remark}

\newenvironment{definition}{\begin{definitionn} \em}{ \end{definitionn}}
\newenvironment{theorem}{\begin{theoremm}}{\end{theoremm}}
\newenvironment{lemma}{\begin{lemmaa}}{\end{lemmaa}}
\newenvironment{fact}{\begin{factt}}{\end{factt}}
\newenvironment{corollary}{\begin{corollaryy}}{\end{corollaryy}}
\newenvironment{claim}{\begin{claimm}}{\end{claimm}}
\newenvironment{convention}{\begin{conventionn} \em}{\end{conventionn}}
\newenvironment{proof}{ {\bf Proof.} }{\  $\Box$ \vspace{.1in} }
\newenvironment{example}{\begin{examplee} \em}{\end{examplee}}
\newenvironment{remark}{\begin{remarkk} \em}{\end{remarkk}}

\title{Elementary-base cirquent calculus II: Choice quantifiers}
\author{Giorgi Japaridze
  \\  
 \\ Villanova University and\\ Institute of Philosophy, Russian Academy of Sciences\\
 Email: giorgi.japaridze@villanova.edu\\
 URL: http://www.csc.villanova.edu/$^\sim$japaridz/
}
\date{}
\maketitle

\begin{abstract} Cirquent calculus is a novel proof theory  permitting component-sharing between  logical expressions.   
Using it, the predecessor article ``Elementary-base cirquent calculus I: Parallel and choice connectives'' built the sound and complete axiomatization $\lll$  of a propositional fragment of computability logic. The atoms of the language of $\lll$ represent elementary, i.e., moveless, games, and the logical vocabulary consists of negation, parallel connectives and choice connectives. The present paper constructs the first-order version $\mmm$ of $\lll$, also enjoying soundness and completeness. The language of $\mmm$ augments that of $\lll$ by including choice quantifiers. Unlike classical predicate calculus, $\mmm$ turns out to be decidable.  
\end{abstract}

\noindent {\em MSC}: primary: 03B47; secondary: 03B70; 03F03; 03F20; 68T15. 

\  

\noindent {\em Keywords}: Proof theory; Cirquent calculus; Resource semantics; Deep inference; Computability logic

\section{Preface}\label{intr}

 Cirquent calculus  is a family of deep inference (cf. \cite{gug}) proof systems permitting various sorts of component-sharing between different parts of logical expressions.  
The earlier article \cite{cl16} constructed a cirquent calculus   system $\lll$ for the elementary-base, recurrence-free propositional fragment of computability logic (the game-semantically conceived logic of computational problems introduced in \cite{Jap03})  and proved its   soundness and completeness. The present article takes that result to the first-order level, with the so called choice quantifiers. The resulting system  $\mmm$, in contrast to its classical counterpart, is a decidable predicate logic. While a variety of propositional systems have been built by now in cirquent calculus \cite{Cirq,Japdeep,taming1,taming2,cl16,XuIGPL,XuIf,XuLast}, $\mmm$ is the first cirquent calculus  system  with quantifiers. The formal semantics for its language (and beyond) was set up in \cite{lmcs}, but no axiomatizations had been attempted so far. 

Being a continuation  of \cite{cl16} in the proper sense, the present article should  only be read in combination with its predecessor, as it relies on but does not reintroduce the main concepts from \cite{cl16}. Nor does it discuss related literature or the relevant motivations and philosophy underlying  cirquent calculus and computability logic, as this, again, is done in \cite{cl16}. 

The language of $\mmm$ extends that of $\lll$ by augmenting its logical vocabulary with the {\em choice universal quantifier} $\ada$ (``{\em chall}'') and {\em choice existential quantifier} $\ade$ (``{\em chexists}''), and allowing atoms of any arities. Throughout this article, for simplicity, we assume that the universe of discourse is always the set $\mathbb{N}$ of natural numbers, by innocent abuse of concepts identified with the corresponding decimal numerals. If so, $\ada xG(x)$ is the game where, at the beginning of a play, the environment chooses one of $n\in\mathbb{N}$, after which the game continues as $G(n)$; if such a choice is never made, then the environment loses.  $\ade xG(x)$ is similar, only here it is the machine who can and must make an initial  choice. Thus, $\ada xG(x)$ is in fact nothing but the infinite choice conjunction $G(0)\adc G(1)\adc G(2)\adc\cdots$, and $\ade xG(x)$ 
is the infinite choice disjunction $G(0)\add G(1)\add G(2)\add\cdots$. 

\section{Syntax}\label{Syntax}

The  language of $\mmm$ is the same as that of first-order classical logic, only with the quantifiers $\forall,\exists$ replaced by their choice (``constructive'') counterparts $\ada,\ade$, and additionally including the choice connectives $\adc,\add$ as well as the decimal numerals $0,1,2,\cdots$ referred to as {\bf constants}.   
We will be typically using $x,y,z,\cdots$ as metavariables for the variables of the language, $p,q,r,\cdots$ for its predicate letters,  and $a,b,c\cdots$ for its constants.

We fix four pairwise disjoint infinite sets $\mathbb{C}(\add)$, $\mathbb{C}(\adc)$,  $\mathbb{C}(\ade)$ and $\mathbb{C}(\ada)$, whose elements will be referred to as   {\bf $\add$-clusters},  {\bf $\adc$-clusters}, {\bf $\ade$-clusters} and  {\bf $\ada$-clusters}, respectively. The Gothic letters $\mathfrak{a}, \mathfrak{b}, \mathfrak{c},\cdots$ will be used as metavariables for clusters.  

A {\bf term} is either a variable  or a constant.  A {\bf nonlogical  atom} is $p(t_1,\cdots,t_n)$, where $p$ is an $n$-ary predicate letter and $t_1,\cdots,t_n$ are terms. A {\bf nonlogical literal} is either $A$  or $\neg A$, where $A$ is a nonlogical atom.  The expressions $\top$ and $\bot$ are said to be {\bf logical literals}. 
 A (choice) {\bf universal quantor} (resp. {\bf existential quantor}) is the expression $\ada^\mathfrak{c}$ (resp. $\ade^\mathfrak{c}$), where $\mathfrak{c}$ is a $\ada$- (resp. $\ade$-) cluster.

\begin{definition}\label{basecir}
A {\bf cirquent} is defined inductively as follows:\vspace{-5pt}
\begin{itemize}
\item Each  (logical or nonlogical)  literal is a cirquent.\vspace{-7pt}
\item If $A$ and $B$ are cirquents, then $(A)\mlc(B)$ is a cirquent.\vspace{-7pt}   
\item If $A$ and $B$ are cirquents, then $(A)\mld(B)$ is a cirquent.\vspace{-7pt}
\item If $A$ and $B$ are cirquents and $\mathfrak{c}$ is a $\adc$-cluster,  then $(A)\adc^\mathfrak{c}(B)$ is a cirquent.\vspace{-7pt}
\item If $A$ and $B$ are cirquents and $\mathfrak{c}$ is a $\add$-cluster,  then $(A)\add^\mathfrak{c}(B)$ is a cirquent.\vspace{-7pt}
\item If $A$ is a cirquent, $x$ is a variable and $\mathfrak{c}$ is a $\ada$-cluster,  then $\ada^\mathfrak{c}x(A)$ is a cirquent.\vspace{-7pt}
\item If $A$ is a cirquent, $x$ is a variable and $\mathfrak{c}$ is a $\ade$-cluster,  then $\ade^\mathfrak{c}x(A)$ is a cirquent.
\end{itemize}   
\end{definition}

A cirquent of the form $(A)\mlc(B)$ is said to be {\bf $\mlc$-rooted}, a cirquent of the form $A\adc^\mathfrak{c}B$ is said to be $\adc^\mathfrak{c}$-rooted, and similarly for $\mld,\add^\mathfrak{c},\ada^\mathfrak{c},\ade^\mathfrak{c}$. If we simply say ``$\adc$-rooted'', it should be understood as ``$ \adc^\mathfrak{c}$-rooted for whatever $ \mathfrak{c}$''. Similarly for $\add,\ada,\ade$.

As in \cite{cl16}, negation is only allowed to be applied to atoms. $\neg \ada^\mathfrak{c} x F$ should be understood as an abbreviation of $\ade^\mathfrak{c} x\neg F$, and $\neg \ade^\mathfrak{c} xF$ as an abbreviation of $\ada^\mathfrak{c} x\neg F$. All other conventions of \cite{cl16} regarding the usage of $\neg$ and $\mli$ remain in force.  

When omitting parentheses in cirquents,  our convention is that $\neg$, $\ada^\mathfrak{c}x$ and $\ade^\mathfrak{c}x$ have the highest precedence, then comes $\mli$, then $\adc^\mathfrak{c}$ and $\add^\mathfrak{c}$, and then $\mlc$ and $\mld$.   

All (other) standard terminological and notational conventions of traditional logic also remain in force. This includes the concepts of free and bound occurrences of variables, or the practice of representing a cirquent as $A(x)$ when first mentioning it, and then writing $A(t)$ to mean the result of replacing in $A(x)$ all free occurrences of the variable $x$ by the term $t$. A cirquent is said to be {\bf closed} iff it has no free occurrences of variables. 

\section{Semantics} 

We (re)define {\em LegalRuns} as the set of all runs $\Gamma$ satisfying 
the following conditions:\vspace{-5pt}
\begin{enumerate}
\item Every move of $\Gamma$ is the string $\mathfrak{c}.a$, where either (1) $\mathfrak{c}$ is a $\adc$- or $\add$-cluster and $a\in\{0,1\}$, or (2) $\mathfrak{c}$ is a $\ada$- or $\ade$-cluster and $a\in\mathbb{N}$.\vspace{-7pt}

\item Whenever $\Gamma$ contains a move $\mathfrak{c}.a$ where $\mathfrak{c}$ is a $\adc$- or $\ada$-cluster,  the move is $\bot$-labeled.\vspace{-7pt}   

\item  Whenever $\Gamma$ contains a move $\mathfrak{c}.a$ where $\mathfrak{c}$ is a $\add$- or $\ade$-cluster, the move is $\top$-labeled.\vspace{-7pt} 

\item  For any cluster $\mathfrak{c}$,  $\Gamma$ contains at most one move of the form $\mathfrak{c}.a$.\vspace{5pt}
 \end{enumerate}

The intuitive meaning of condition 1 is that every move signifies either a choice between ``left'' ($0$) and ``right'' ($1$) in a $\adc$- or $\add$-cluster, or a choice among the constants $ 0,1,2,\cdots $ in some $\ada$- or $\ade$-cluster. Conditions 2 and 3 say that the environment moves (chooses) only in $\adc$- or $\ada$-clusters, and the machine only in $\add$- or $\ade$-clusters. Finally,  condition 4 says that, in any given cluster, a choice can be made only once. 

As in \cite{cl16}, given a run $\Gamma\in LegalRuns$, we  say that a cirquent of the form $A\add^\mathfrak{c} B$ or $A\adc^\mathfrak{c} B$ is {\bf $\Gamma$-resolved} iff 
$\Gamma$ contains one of the moves $\mathfrak{c}.0$ or $\mathfrak{c}.1$; then by the {\bf $\Gamma$-resolvent} of the cirquent we mean $A$ if such a move is  
$\mathfrak{c}.0$, and $B$ if it is $\mathfrak{c}.1$. Extending this terminology to quantifiers, we say that a cirquent of the form $\ade^\mathfrak{c}x A(x)$ or $\ada^\mathfrak{c} xA(x)$ is {\bf $\Gamma$-resolved} iff 
$\Gamma$ contains the move $\mathfrak{c}.a$ for some constant $a$; then by the {\bf $\Gamma$-resolvent} of such a cirquent we mean $A(a)$. Sometimes, instead of saying that the cirquent $A\add^\mathfrak{c} B$ (or $A\adc^\mathfrak{c} B$, or $ \ada^\mathfrak{c}x A(x)$, or $\ade^\mathfrak{c}x A(x)$) is resolved, we may simply say that the cluster $\mathfrak{c}$ is resolved. In all cases, as expected, ``{\bf  unresolved}'' means ``not resolved''. 

An {\bf interpretation}  is a function $^*$ that assigns   
to every $n$-ary predicate letter $p$ a relation $p^*\subseteq\mathbb{N}^n$. When $(c_1,\cdots,c_n)\in p^*$, we say that $^*$ makes the atom $p(c_1,\cdots,c_n)$ {\bf true}, or simply that   $p^*(c_1,\cdots,c_n)$ is true. As usual, ``{\bf false}''   means ``not true''.   The concepts of truth and falsity extend to $\top$, $\bot$ and all $\neg,\mlc,\mld$-combinations of closed atoms in the standard way: $\top^*$ is always true; $\bot^*$ is always false; $(\neg A)^*$ is true iff $A^*$ is false; $(A\mlc B)^*$ is true iff both $A^*$ and $B^*$ are true;   $(A\mld B)^*$ is true iff at least one of $A^*,B^*$ is true.   
When a cirquent is represented as $A(t)$, we usually write $A^*(t)$ instead of $\bigl(A(t)\bigr)^*$.

Remember from \cite{cl16}  that,  when  we say  ``{\bf won}'' without specifying a player (as in the following definition), it always means ``won by the machine''.  Similarly for ``{\bf lost}''.

\begin{definition} \label{semantics}
Every closed cirquent $C$ and interpretation $^*$ induces a unique game $C^{*}$, which we refer to as ``$C$ {\bf under} $^*$'', defined as follows. The 
set $\legal{C^{*} }{}$ of legal runs of such a game is nothing but {\em LegalRuns}. Since  $\legal{C^*}{}$ does not depend on $C$ or $^*$, subsequently we shall simply say ``legal run'' rather than  ``legal run of $C^{*}$''. 
The $\win{C^{*}}{}$ component of the game $C^{*}$ is defined by stipulating that a legal run  $\Gamma$ is a won   run of $C^{*}$ iff one of the following conditions is satisfied: 

1.  $C$  is a  literal and $C^*$ is true.

2. $C$ has the form $A_0\mlc A_1$ (resp. $A_0\mld A_1$) and, for both (resp. at least one) $i\in\{0,1\}$,  $\Gamma$ is a won run of $A_{i}^{*}$.

3. $C$ is a $\Gamma$-resolved $\adc$-, $\add$-, $\ada$- or $\ade$-rooted cirquent and,  where $B$ is the resolvent, $\Gamma$ is a won run of $B^{*}$. 

4. $C$ is a $\Gamma$-unresolved $\adc$- or $\ada$-rooted cirquent.

\end{definition}

When an interpretation $^*$ is fixed in the context or is irrelevant, by abuse of notation we may omit explicit references to it, and identify a cirquent $C$ with the game $C^*$. For instance, we may say that the machine wins $C$ instead of saying that the machine wins $C$ under $^*$.

\begin{definition} \label{krisha}
 Consider a closed cirquent $C$.

1. For an interpretation $^*$, a {\bf solution} of $C$ under $^*$, or simply a solution of $C^*$, is an HPM $\cal H$ such that ${\cal H}\models   C ^{*}$. 
 We say that $C$ is {\bf computable} under $^*$, or simply that $C^*$ is computable, iff $C^*$ has a solution. 

2. A {\bf logical}  {\bf solution} of $C$ is an HPM $\cal H$ such that, for any interpretation $^*$, $\cal H$ is a solution of $ C ^*$. We say that $C$ is (logically)  {\bf valid} if it has a logical solution; otherwise $C$ is  {\bf invalid}.
\end{definition}

\section{Axiomatics}

Just like $\lll$, our present system $\mmm$ has $\top$ as its only axiom.  The inference rules of  $\mmm$ are listed below,  where   all  notational conventions from \cite{cl16} remain  in force.  Note that all rules of $\lll$ are also rules of $\mmm$, even if ``somewhat'' renamed.

\begin{description}
\item[Por-commutativity:]  $X[B\mld A]\leadsto X[A\mld B]$.
\item[Pand-commutativity:] $X[B\mlc A]\leadsto X[A\mlc B]$. 
\item[Por-associativity:]  $X[A\mld(B\mld C)]\leadsto  X[(A\mld B)\mld C ]$.
\item[Pand-associativity:] $X[A\mlc(B\mlc C)]\leadsto  X[(A\mlc B)\mlc C ]$.
\item[Por-identity:]  $X[A]\leadsto X[A\mld \bot]$.
\item[Pand-Identity:]  $X[A]\leadsto  X[A\mlc \top]$.  
\item[Por-domination:]  $X[\top]\leadsto X[ A\mld \top]$.
\item[Pand-domination:]  $X[\bot]\leadsto  X[A\mlc \bot]$. 
\item[Left chor-choosing:]  $X[A_1,\cdots,A_n]\leadsto  X[A_1 \add^\mathfrak{c} B_1,\cdots,A_n \add^\mathfrak{c} B_n]$, \ where $A_1\add^\mathfrak{c} B_1$, $\cdots$, $A_n \add^\mathfrak{c} B_n$ are all $\add^\mathfrak{c}$-rooted subcirquents of the conclusion. 
\item[Right chor-choosing:] $X[B_1,\cdots,B_n]\leadsto  X[A_1 \add^\mathfrak{c} B_1,\cdots,A_n \add^\mathfrak{c} B_n]$,  \ where $A_1\add^\mathfrak{c} B_1$, $\cdots$, $A_n \add^\mathfrak{c} B_n$ are all $\add^\mathfrak{c}$-rooted subcirquents of the conclusion.
\item[Chexists-choosing:] \(X[A_1(a),\cdots,A_n(a)]\leadsto  X[\ade^\mathfrak{c} x_1A_1(x_1),\cdots,\ade^\mathfrak{c} x_nA_n(x_n)],\)  where $a$ is any constant and $\ade^\mathfrak{c} x_1A_1(x_1),\cdots,\ade^\mathfrak{c} x_nA_n(x_n)$ are all $\ade^\mathfrak{c}$-rooted subcirquents of the conclusion.    
\item[Left chand-cleansing:] $X\bigl[Y[A]\adc^\mathfrak{c} C\bigr]\leadsto X\bigl[Y[A\adc^\mathfrak{c} B]\adc^\mathfrak{c} C\bigr]$.
\item[Right chand-cleansing:]  $X\bigl[C \adc^\mathfrak{c} Y[B]\bigr]\leadsto X\bigl[C \adc^\mathfrak{c} Y[A\adc^\mathfrak{c} B]\bigr]$.
\item[Chall-cleansing:] $X\bigl[\ada^\mathfrak{c} xY[A(x)]\bigr]\leadsto X\bigl[\ada^\mathfrak{c} xY[\ada^\mathfrak{c} yA(y)]\bigr]$. 
\item[Pand-distribution:]     $X[(A\mld C)\mlc(B \mld C)]\leadsto  X[(A\mlc B)\mld C]$.
\item[Chand-distribution:]  $X[(A\mld C)\adc^\mathfrak{c} (B \mld C)]\leadsto  X[(A\adc^\mathfrak{c} B)\mld C]$.  
\item[Chall-distribution:] $X [\ada^\mathfrak{c}x (A \mld B ) ]\leadsto  X[\ada^\mathfrak{c} xA \mld B]$, where $x$ has no free occurrences in $B$.   
\item[Trivialization:] $X[\top]\leadsto  X[\neg A\mld A]$,  where $A$ is a nonlogical atom. 
\item[Chandchotomy:] \[X \bigl[\bigl( (A\mlc  C \adc^\mathfrak{b} D  ) \adc^\mathfrak{a}  ( B\mlc  C \adc^\mathfrak{b} D  )\bigr)\adc^\mathfrak{c}  \bigl(( A\adc^\mathfrak{a} B \mlc C )
      \adc^\mathfrak{b} (  A\adc^\mathfrak{a} B \mlc D ) \bigr)\bigr] \leadsto  X[ A\adc^\mathfrak{a} B \mlc  C \adc^\mathfrak{b} D ],\]  
where $\mathfrak{c}$ does not occur in the conclusion.
\item[Challchotomy:] \[X [\ada^\mathfrak{a}x  (A \mlc \ada^\mathfrak{b}yB   )\adc^\mathfrak{c} \ada^\mathfrak{b}y  (\ada^\mathfrak{a} xA \mlc B   ) ] \leadsto  X [\ada^\mathfrak{a}xA \mlc \ada^\mathfrak{b}yB  ],\]  where $\mathfrak{c}$ does not occur in the conclusion, $x$ has no free occurrences in $\ada^\mathfrak{b}yB $ and $y$ has no free occurrences in $\ada^\mathfrak{a}xA $.
\item[Chandallchotomy:] \[X \bigl[\bigl( (A\mlc \ada^\mathfrak{b}xC   )\adc^\mathfrak{a} (B\mlc \ada^\mathfrak{b}xC  )\bigr)\adc^\mathfrak{c} \ada^\mathfrak{b} x  ( A\adc^\mathfrak{a} B \mlc C   )\bigr] \leadsto  X [ A\adc^\mathfrak{a} B \mlc \ada^\mathfrak{b}xC  ],\]  where $\mathfrak{c}$ does not occur in the conclusion and $x$ has no free occurrences in $A\adc^\mathfrak{a} B$. 
\item[Chand-splitting:] $A,B\leadsto  A\adc^\mathfrak{c} B$,  where neither $A$ nor $B$ has   occurrences of $\mathfrak{c}$. 
\item[Chall-splitting:] $A(a)\leadsto  \ada^\mathfrak{c} xA(x)$,  where neither $\mathfrak{c}$ nor    $a$ has occurrences in $A(x)$.  
\end{description}

We will be using the word ``{\bf Commutativity}'' as a common name of Por-commutativity and Pand-commutativity.  Similarly for all other rules. Throughout the rest of this article, ``$A$ is provable'' always means ``$A$ is provable in $\mmm$'', written $\mmm\vdash A$.  

\begin{example}\label{exm1} The cirquent $\ade^\mathfrak{a}xp(x)\mld \ade^\mathfrak{b}xp(x)\mli \ade^\mathfrak{c}xp(x)$ can be shown to be unprovable. However, it becomes provable if the two clusters $\mathfrak{a}$ and $\mathfrak{b}$ are the same. Below is a proof of  $\ade^\mathfrak{a}xp(x)\mld \ade^\mathfrak{a}xp(x)\mli \ade^\mathfrak{c}xp(x)$, i.e., of $\bigl(\ada^\mathfrak{a}x\neg p(x)\mlc \ada^\mathfrak{a}x\neg p(x)\bigr)\mld \ade^\mathfrak{c}xp(x)$.

1. $\top$ \hspace{5pt} Axiom 

2. $\top \mlc \top$ \hspace{5pt} Pand-identity: 1

3. $\bigl(\neg p(a)\mld p(a)\bigr)\mlc \bigl(\neg p(a) \mld p(a)\bigr)$ \hspace{5pt} Trivialization: 2 (twice)

4. $\bigl(\neg p(a)\mlc \neg p(a)\bigr)\mld p(a)$\hspace{5pt} Pand-distribution: 3

5. $\bigl(\neg p(a)\mlc \neg p(a)\bigr)\mld \ade^\mathfrak{c}xp(x)$ \hspace{5pt} Chexists-choosing: 4

6. $\ada^\mathfrak{a}x \Bigl(\bigl(\neg p(x)\mlc \neg p(x)\bigr)\mld \ade^\mathfrak{c}xp(x)\Bigr)$ \hspace{5pt} Chall-splitting: 5

7. $\ada^\mathfrak{a}x \bigl(\neg p(x)\mlc \neg p(x)\bigr)\mld \ade^\mathfrak{c}xp(x)$ \hspace{5pt} Chall-distribution: 6

8. $\Bigl(\ada^\mathfrak{a}x \bigl(\neg p(x)\mlc \neg p(x)\bigr)\mld \ade^\mathfrak{c}xp(x)\Bigr)\adc^\mathfrak{b}\Bigl(\ada^\mathfrak{a}x \bigl(\neg p(x)\mlc \neg p(x)\bigr) \mld \ade^\mathfrak{c}xp(x)\Bigr)$ \hspace{5pt} Chand-splitting: 7,7

9. $\ada^\mathfrak{a}x \bigl(\neg p(x)\mlc \neg p(x)\bigr)\adc^\mathfrak{b}\ada^\mathfrak{a}x \bigl(\neg p(x)\mlc \neg p(x)\bigr) \mld \ade^\mathfrak{c}xp(x)$ \hspace{5pt}Chand-distribution: 8
 
10. $\ada^\mathfrak{a}x \bigl(\neg p(x)\mlc \ada^\mathfrak{a}x\neg p(x)\bigr)\adc^\mathfrak{b}\ada^\mathfrak{a}x \bigl(\ada^\mathfrak{a}x \neg p(x)\mlc \neg p(x)\bigr) \mld \ade^\mathfrak{c}xp(x)$ \hspace{5pt} Chall-cleansing: 9 (twice)

11. $\bigl(\ada^\mathfrak{a}x\neg p(x)\mlc \ada^\mathfrak{a}x\neg p(x)\bigr)\mld \ade^\mathfrak{c}xp(x)$ \hspace{5pt} Challchotomy: 10
\end{example}

\section{The preservation lemma}

A {\bf surface occurrence} of a subcirquent or an operator  in a given cirquent is an occurrence which is not in the scope of a choice operator (i.e., of $\adc^\mathfrak{c}$, $\add^\mathfrak{c}$, $\ada^\mathfrak{c}$ or $\ade^\mathfrak{c}$ for whatever $\mathfrak{c}$).

\begin{definition}\label{residue}
Given a closed cirquent $C$ and a legal run $\Gamma$, the {\bf $\Gamma$-residue} of $C$ is the  $\adc,\add,\ada,\ade$-free cirquent obtained from $C$ as a result of repeatedly replacing  until no longer possible: 

1. every surface occurrence of every $\Gamma$-resolved $\adc$-, $\add$-, $\ada$- or $\ade$-rooted subcirquent by its resolvent;

2. every surface occurrence of every $\Gamma$-unresolved $\adc$- or $\ada$-rooted subcirquent by $\top$;

3. every surface occurrence of every $\Gamma$-unresolved $\add$- or $\ade$-rooted subcirquent by $\bot$.

\end{definition}

\begin{lemma}\label{f1}
Consider any closed cirquent $C$, interpretation $^*$  and legal run $\Gamma$. Let $R_C$ be the $\Gamma$-residue of $C$. Then  $\Gamma$ is a won (by the machine) run of $C^*$ iff $R_{C}^{*}$ is true. 
\end{lemma}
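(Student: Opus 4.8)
The plan is to prove the lemma by structural induction on the cirquent $C$, mirroring the recursive structure of both Definition \ref{semantics} (the win condition) and Definition \ref{residue} (the residue), since the two definitions are built by parallel recursions. The key observation making induction feasible is that forming the $\Gamma$-residue commutes with taking the top-level logical structure: the $\Gamma$-residue of $A_0\mlc A_1$ is $R_{A_0}\mlc R_{A_1}$, and similarly for $\mld$, because the replacement operations of Definition \ref{residue} act only on surface occurrences of choice-rooted subcirquents, and the two arguments of a surface $\mlc$ or $\mld$ each contain surface occurrences that get resolved independently. For a choice-rooted $C$, the residue descends into the resolvent (if resolved) or collapses to $\top$ or $\bot$ (if unresolved), which is exactly what the win condition of Definition \ref{semantics} tracks.

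First I would handle the base case where $C$ is a literal. Here $C$ is already $\adc,\add,\ada,\ade$-free, so $R_C=C$ and $R_C^*=C^*$; by clause 1 of Definition \ref{semantics}, $\Gamma$ wins $C^*$ iff $C^*$ is true, which is immediate. Next, for the $\mlc$ case with $C=A_0\mlc A_1$, I would note $R_C=R_{A_0}\mlc R_{A_1}$, so by the truth definition for $\mlc$, $R_C^*$ is true iff both $R_{A_i}^*$ are true; by the induction hypothesis this holds iff $\Gamma$ wins both $A_i^*$, which by clause 2 is exactly the condition for $\Gamma$ winning $C^*$. The $\mld$ case is symmetric, using the disjunctive clauses in both definitions.

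The more delicate cases are the choice-rooted ones, $C=A_0\odot^{\mathfrak{c}}A_1$ (for $\odot\in\{\adc,\add\}$) or $C=\ada^{\mathfrak{c}}xA(x)$, $C=\ade^{\mathfrak{c}}xA(x)$. I would split on whether $\mathfrak{c}$ is $\Gamma$-resolved. If it is resolved with resolvent $B$, then the occurrence of $C$ itself is a surface occurrence of a resolved choice-rooted cirquent, so the residue first replaces $C$ by $B$ and then continues reducing, giving $R_C=R_B$; by clause 3 of Definition \ref{semantics}, $\Gamma$ wins $C^*$ iff $\Gamma$ wins $B^*$, which by the induction hypothesis (applied to $B$, a proper subcirquent) holds iff $R_B^*=R_C^*$ is true. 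If $\mathfrak{c}$ is unresolved, then for a $\adc$- or $\ada$-rooted $C$ the residue is $\top$, which is always true, matching clause 4 (the machine wins an unresolved $\adc$- or $\ada$-rooted cirquent); and for an unresolved $\add$- or $\ade$-rooted $C$ the residue is $\bot$, which is always false, matching the absence of any winning clause for such $C$ (so $\Gamma$ is lost). I expect the main obstacle to be stating the residue-commutation facts cleanly and justifying that replacing a surface choice-rooted subcirquent by its resolvent and then reducing yields the same $\adc,\add,\ada,\ade$-free cirquent as reducing the resolvent directly, i.e., confirming that the order-of-replacement in Definition \ref{residue} is confluent and that surface occurrences in the resolvent remain surface occurrences after the top-level replacement. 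This is a routine but careful bookkeeping argument about the ``repeatedly replacing until no longer possible'' process, and I would isolate it as a preliminary observation before running the induction.
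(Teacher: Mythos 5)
Your proposal is correct and takes essentially the same route as the paper's own proof: structural induction on $C$, with the residue commuting with $\mlc$ and $\mld$, and the choice-rooted cases split according to whether the cluster is $\Gamma$-resolved (resolved reduces to the resolvent via clause 3, unresolved collapses to $\top$ or $\bot$ matching clause 4 or its absence). The only differences are cosmetic: you spell out the unresolved $\add$/$\ade$ subcase and the confluence of the replacement process, which the paper leaves implicit, and you should note only that for the quantifier cases the resolvent $A(a)$ is a substitution instance of a subcirquent rather than a literal subcirquent, so the induction should be phrased on complexity (as the paper does) rather than on the subcirquent relation.
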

\begin{proof} Let $C$, $^*$, $\Gamma$, $R_{C}$ be as above. We proceed by induction on the complexity of $C$. 

If $C$ is a literal, then $R_C=C$. Hence, from clause 1 of Definition \ref{semantics} (and the fact that all other clauses of that definition are about non-literal cases), $\Gamma$ is a won run of $C^*$  iff $R_{C}^{*}$ is true.

If $C$ is of the form $A\mlc B$, then obviously $R_{C} =R_A\mlc R_B$, where $R_A$ is the residue of $A$ and $R_B$ is the residue of $B$. By Definition \ref{semantics},  $\Gamma$ is a won run of  $(A\mlc B)^*$ iff it is a won run of both $ {A}^{*}$ and $ {B}^{*}$. But, by the induction hypothesis, $\Gamma$ is a won run of both $A^*$ and $B^*$   iff both $R_{A}^{*}$ and $R_{B}^{*}$ are true. ``Both $R_{A}^{*}$ and $R_{B}^{*}$ are true'', in turn, means nothing but that $R_{A}^{*}\mlc R_{B}^{*}$, i.e., $R_{C}^{*}$, is true.    

The case of $C$ having the form $A\mld B$ will be handled similarly.

Suppose $C$ is of the form $A\adc^\mathfrak{c} B$. If $C$ is unresolved, then, by clause 4 of Definition \ref{semantics}, $\Gamma$ is a won run of $C^*$. But then, as desired, $R_{C}^{*}$ is true because $R_{C}=\top$. Now suppose $C$ is resolved. Without loss of generality we may assume that the resolvent is $A$. Notice that then $R_{C}$ is the residue of not only $C$ but also of $A$. We have: $\Gamma$ is a won run of $C^*$ iff  (by clause 3 of Definition \ref{semantics}) it is a won run of $A^*$ iff (by the induction hypothesis) $R_{C}^{*}$ is true, as desired.  

The remaining cases of $C$ having the form $A\add^\mathfrak{c} B$, $\ada^\mathfrak{c} x A(x)$ or  $\ade^\mathfrak{c} x A(x)$ will be handled similarly. 
\end{proof}

\begin{lemma} \label{pres}
\ 

1. Each application of any of the rules of $\mmm$ preserves logical validity in the premises-to-conclusion direction, i.e., if all premises are   valid, then so is the conclusion.   
 
2. Each application of any of the rules of $\mmm$ other than (the three versions of) Choosing also preserves logical validity in the conclusion-to-premises direction, i.e., if the conclusion is   valid, then so are all premises. 
\end{lemma}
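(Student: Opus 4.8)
The plan is to argue rule by rule, in every case reducing assertions about winning to assertions about truth of residues via Lemma \ref{f1}. The enabling observation is that, by Definition \ref{semantics}, all games $C^*$ share one and the same set $\mathbf{Lr}$ of legal runs (namely \emph{LegalRuns}), so that a single HPM $\mathcal{H}$ is a logical solution of $C$ iff, for every interpretation $^*$, every run $\mathcal{H}$ generates is won in $C^*$; and, by Lemma \ref{f1}, ``$\Gamma$ won in $C^*$'' is the same as ``$R_C^*$ is true'', where $R_C$ is the $\Gamma$-residue. Validity thereby becomes a statement purely about residues, and the rules split into three groups according to how they act on residues.

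The first and largest group consists of the rules introducing no fresh cluster: the two Commutativities, the two Associativities, the two Identities, the two Dominations, the three Distributions, the two chand-cleansings, Chall-cleansing, and Trivialization. For each I would show that, for every legal run $\Gamma$ and interpretation $^*$, the $\Gamma$-residues of premise and conclusion have the same truth value. When the active position is at the surface this is a classical propositional equivalence applied to the residues of the immediate subcirquents — commutativity/associativity/distributivity of $\mlc,\mld$, the laws $A\equiv A\mld\bot$, $A\equiv A\mlc\top$, $\top\equiv A\mld\top$, and so on — where in Pand-distribution the two occurrences of the shared component $C$ carry the same clusters and hence a common residue, so that $(R_A\mld R_C)\mlc(R_B\mld R_C)\equiv(R_A\mlc R_B)\mld R_C$ applies. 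When the active position lies in the scope of a choice operator it is exposed in the residue only after that operator is resolved, returning us to the surface case, and otherwise the two residues coincide verbatim. The only delicate point is the \emph{unresolved} cases of the distribution and cleansing rules, where one residue may be $\top$ and the other $\top\mld R_C$ (or the analogue): these still have the same truth value, so the equivalence survives. Since then the $\mathbf{Wn}$ components of premise$^*$ and conclusion$^*$ coincide for every $^*$, the \emph{same} HPM solves one iff it solves the other, and both directions of the lemma hold at once for this group.

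The second group is the three Choosing rules, for which only Part~1 is claimed. Here, given a logical solution $\mathcal{H}$ of the premise, the solution of the conclusion first makes the machine's own choice move — $\mathfrak{c}.0$ (Left chor-choosing), $\mathfrak{c}.1$ (Right chor-choosing), or $\mathfrak{c}.a$ (Chexists-choosing), all legal since $\add,\ade$ are the machine's clusters — and thereafter simulates $\mathcal{H}$; note that $\mathfrak{c}$ occurs nowhere in the premise, so $\mathcal{H}$ never touches it. After that single move the residue of the conclusion coincides with that of the premise (every $A_i\add^\mathfrak{c}B_i$ reduces to $A_i$, resp.\ $B_i$, and every $\ade^\mathfrak{c}x_iA_i(x_i)$ to $A_i(a)$), so Lemma \ref{f1} yields that the generated run is won. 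The converse direction genuinely fails — e.g.\ $p\add^\mathfrak{c}\top$ is valid but $p$ is not — which is precisely why these rules are exempted from Part~2.

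The remaining rules — Chand-splitting, Chall-splitting, and the three Chotomy rules — introduce a cluster $\mathfrak{c}$ that is fresh (absent from the conclusion, and, for the splittings, from the premises), and these I regard as the main obstacle, since no static residue-equivalence is available: the premise game literally contains the extra environment-controlled cluster $\mathfrak{c}$. For Part~1, the solution of the conclusion simulates those of the premise(s) while supplying to the simulation a resolution of $\mathfrak{c}$ chosen from the real environment's play: for the chotomies it feeds $\mathfrak{c}.0$ or $\mathfrak{c}.1$ according to whether the environment first touches $\mathfrak{a}$ or $\mathfrak{b}$, the point being that each branch's resolvent is built to match the conclusion's resolvent (e.g.\ in Chandchotomy the left branch under $\mathfrak{a}.0$ gives $A\mlc(C\adc^\mathfrak{b}D)$, exactly the conclusion's resolvent under $\mathfrak{a}.0$), while if the environment touches neither, both sides reduce to $\top$; for the splittings it forwards $\mathfrak{c}$ or its two sides directly. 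For Part~2 the simulation runs in reverse: the solution of the premise runs the conclusion's solution, answering the environment's $\mathfrak{c}$-move by committing to the matching branch, or winning outright when $\mathfrak{c}$ is left unresolved (the $\adc^\mathfrak{c}$-rooted position then residues to $\top$). In every case the verification that the constructed run is won is a check, via Lemma \ref{f1}, that the residues agree across the finitely many cases for how $\mathfrak{a},\mathfrak{b},\mathfrak{c}$ get resolved. For the quantifier-bearing rules the eigenvariable/freeness side conditions are exactly what make these resolvent-matchings well-defined, and Chall-splitting needs one extra ingredient in the Part~1 direction: validity of the single instance $A(a)$ must yield validity of every instance $A(n)$, which I would justify by an interpretation-renaming argument swapping the values $a$ and $n$, legitimate because $a$ does not occur in $A(x)$. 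Finally, throughout all groups the deep context $X[\cdot]$ is handled uniformly by locality — the simulations alter behaviour only at the active position and relay every other move verbatim, and the context's contribution to the residue is identical on both sides (being $\top$ or $\bot$ when an enclosing choice operator hides the active position, and reducing to the surface case once it is exposed) — so it suffices to verify the residue matchings at the active position.
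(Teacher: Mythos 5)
Your handling of every rule except Chall-splitting is sound and essentially the paper's own argument: for the no-fresh-cluster rules the paper likewise shows that premise and conclusion induce the very same game under every interpretation (your residue formulation via Lemma \ref{f1} is an equivalent packaging of this); for Choosing it likewise has the conclusion's solution first make the move $\mathfrak{c}.i$ (resp.\ $\mathfrak{c}.a$) and then simulate the premise's solution; and for the chotomies it likewise feeds $\mathfrak{c}.0$ or $\mathfrak{c}.1$ to the simulated machine according to which of $\mathfrak{a},\mathfrak{b}$ the real environment touches first. (One simplification you missed: for Part 2 of the chotomies the paper observes that every won run of the conclusion is literally also a won run of the premise, so the \emph{same} HPM works and no reverse simulation is needed.)

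The genuine gap is the premise-to-conclusion direction of Chall-splitting, which is exactly where the paper invests most of its effort. You propose to get validity of every instance $A(n)$ from validity of $A(a)$ by ``an interpretation-renaming argument swapping the values $a$ and $n$, legitimate because $a$ does not occur in $A(x)$''. But the side condition only guarantees $a\notin A(x)$; nothing prevents the environment from choosing a constant $n$ that \emph{does} occur in $A(x)$, and for such $n$ the swap breaks: applying the swap to $A(n)$ does not yield $A(a)$, because it also disturbs the occurrences of $n$ already native to $A(x)$ --- a renaming acts on constants, not on occurrences, so it cannot separate the $n$'s coming from $x$ from the $n$'s that were there all along. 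Replacing the swap by the non-injective substitution $a\mapsto n$ does not help either, since the run-translation then has no consistent response when the real environment later plays the constant $a$ itself in some $\ada$-move inside the game. This is precisely why the paper uses no static renaming at all: its machine $\cal N$ simulates $\cal M$ while maintaining a \emph{dynamically growing} partial ``image'' function $f$ (the constants of $A(x)$ are their own images, the environment's choice $b$ becomes the image of $a$, each subsequent environment $\ada$-choice becomes the image of a freshly picked imageless constant that is fed to the simulation instead, and $\cal M$'s $\ade$-choices are translated through $f$ on the way out), with the winning verification going through Lemma \ref{f1} together with an auxiliary interpretation $^\circ$ defined from $f$. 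Your fixed swap does work whenever $n$ avoids the constants of $A(x)$, but the conclusion $\ada^\mathfrak{c}xA(x)$ requires a single HPM that survives \emph{every} choice of $n$, and covering the colliding choices needs the paper's dynamic construction (or something equivalent); that idea is absent from your proposal.
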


\begin{proof} As in \cite{cl16}, we will implicitly rely on the clean environment assumption, allowing us to rule out the possibility that the environment ever makes any illegal moves. We shall also implicitly rely on the straightforward fact that    if, for every   interpretation $^*$,  computability of a cirquent $A$ under $^*$ implies computability   of a cirquent $B$ under the same $^*$,    then  logical validity  of $A$ implies logical   validity of $B$.   

If $E\leadsto F$ is an application of any of the rules other than Splitting, Chotomy or Choosing, it is not hard to see that, for any interpretation $^*$,  $E^*$  and $F^*$ are identical as games. So, a logical solution of $E$ is automatically a logical solution of $F$, and vice versa. Let us just look at Chall-cleansing as an   example. Consider an application $E\leadsto F$ of this rule, where $E=X\bigl[\ada^\mathfrak{c} xY(x)[A(x)]\bigr]$ and $F=X\bigl[\ada^\mathfrak{c} xY(x)[\ada^\mathfrak{c} yA(y)]\bigr]$. Fix some arbitrary interpretation for the present context, and let $\Gamma$ be an arbitrary legal run.  We want to show that $\Gamma$ is a won run of $E $ iff it is a won run of $F $. If $\mathfrak{c}$ is $\Gamma$-unresolved, then  $\Gamma$ is a won run of the  $ \ada^\mathfrak{c} xY(x)[\ada^\mathfrak{c} yA(y)] $ component of the conclusion  just as it is a won run of  the $ \ada^\mathfrak{c} xY(x)[A(x)] $ component of the premise. Then, since $E$ and $F$ only differ in that one has $\ada^\mathfrak{c} xY(x)[\ada^\mathfrak{c} yA(y)]$ where the other has $\ada^\mathfrak{c} xY(x)[A(x)]$, we find that $\Gamma$ is a won run of both games $E ,F $ or neither. Now assume $\mathfrak{c}$ is resolved, i.e., $\Gamma$ contains the move $\mathfrak{c}.a$ for some constant $a$. Then  $\Gamma$ is a won run of  $ \ada^\mathfrak{c} xY(x)[\ada^\mathfrak{c} yA(y)] $ iff it is a won run  of $ Y(a)[A(a)] $  iff it is a won run of   $ \ada^\mathfrak{c} xY(x)[A(x)] $. This, again,    implies that $\Gamma$ is a won run of both games $ E , F $ or neither.  

{\em Chor-choosing} is taken care of in the proof of Lemma 6.1 of \cite{cl16}, which shows that this rule preserves computability under any given interpretation.  The same can be said about either direction of   
{\em Chand-splitting}.  

Consider an application $X[A_1(a),\cdots,A_n(a)]\leadsto  X[\ade^\mathfrak{c} x_1A_1(x_1),\cdots,\ade^\mathfrak{c} x_nA_n(x_n)]$ of {\em Chexists-choosing}, and assume $\cal M$ is a logical 
solution of the premise. Let ${\cal N}$ be an HPM that, at the beginning of the play, makes the move $\mathfrak{c}.a$, after which it plays exactly as 
$\cal M$ would. Obviously $\cal N$ 
is a logical solution of  the conclusion.  

Of the three {\em Chotomy} rules, let us just consider Challchotomy, with the remaining two rules being similar. Consider an application 
\[X [\ada^\mathfrak{a}x  (A \mlc \ada^\mathfrak{b}yB   )\adc^\mathfrak{c} \ada^\mathfrak{b}y  (\ada^\mathfrak{a} xA \mlc B   ) ] \leadsto  X [\ada^\mathfrak{a}xA \mlc \ada^\mathfrak{b}yB  ] \] 
of Challchotomy.  It is not hard to see that, under any interpretation,  any won run of the conclusion  is also (``even more so'') a won run of the premise, meaning that a   solution of the conclusion  would automatically also be a   solution of the premise. This takes care of the conclusion-to-premise direction. For the premise-to-conclusion direction, fix an arbitrary interpretation and assume that $\cal M$ is a solution of the premise. Let $\cal N$ be an HPM that, until  it sees that its environment has resolved either $\mathfrak{a}$ or $\mathfrak{b}$, plays just as $\cal M$ would play in the scenario where $\mathfrak{c}$ is not (yet) resolved but otherwise $\cal M$'s imaginary environment is making the same moves as $\cal N$'s real environment is making. If and when it sees that a move $\mathfrak{a}.i$ (resp. $\mathfrak{b}.i$) has been made by its environment, $\cal N$  imagines that $\cal M$'s environment has correspondingly made not only the same move $\mathfrak{a}.i$ (resp. $\mathfrak{b}.i$) but also  $\mathfrak{c}.0$  (resp. $\mathfrak{c}.1$), and  continues playing exactly as $\cal M$ would continue playing in that case. With a little thought, $\cal N$ can be seen to be a solution of the conclusion.

Finally, consider an application $A(a)\leadsto  \ada^\mathfrak{c} xA(x)$ of {\em Chall-splitting}. 

For the conclusion-to-premise direction, assume $\cal M$ is a logical solution of $ \ada^\mathfrak{c} xA(x)$. Let $\cal N$ be the HPM that plays exactly as $\cal M$ would play in the scenario where, at the very beginning of the play, the environment makes the move $\mathfrak{c}.a$. It is not hard to see that $\cal N$ is a logical solution of $A(a)$.  

For the premise-to-conclusion direction,
assume $\cal M$ is a logical solution of $A(a)$. 
Let $\cal N$ be an HPM that plays $\ada^\mathfrak{c} xA(x)$ as follows. While continuously polling its run tape, $\cal N$ maintains a list $L$ of  moves, initially empty. This is just to keep track of which moves made by the environment have already been ``processed'' by $\cal N$. Call the moves that are not in $L$ {\em unprocessed}. $\cal N$ further maintains a partial function   $f:\mathbb{N}\rightarrow \mathbb{N}$. Call the value of $f(c)$ the {\em image} of $c$, and call the constants at which $f$ is not (yet) defined {\em imageless}. Initially, the image of each constant occurring in $A(x)$ is that constant itself, and all other constants are imageless. 

At the beginning of the play, $\cal N$ waits till the environment makes the move $\mathfrak{c}.b$ for some constant $b$. 
If and when this happens (and if not, $\cal N$ wins $\ada^\mathfrak{c} xA(x)$ by doing nothing), $\cal N$ adds $\mathfrak{c}.b$ to $L$, declares $b$ to be the image of (the so far imageless) $a$, and  starts simulating an imaginary play of $A(a)$ by $\cal M$. In this simulation:
\begin{itemize}
  \item  Whenever $\cal N$ sees an unprocessed move $\mathfrak{a}.i$ on its run tape where $\mathfrak{a}$ is a $\adc$-cluster, it adds this move to $L$  and appends $\bot \mathfrak{a}.i$  to the content of the imaginary run tape of $\cal M$.
  \item  Whenever $\cal N$ sees an unprocessed move $\mathfrak{a}.c$ on its run tape where $\mathfrak{a}$ is a $\ada$-cluster, it adds this move to $L$    and appends  $\bot \mathfrak{a}.d$  to the content of the imaginary run tape of $\cal M$, where $d$ is an (say, the smallest) imageless constant; after that $\cal M$ declares $c$ to be the image of $d$.  
\item Whenever $\cal N$ sees that the simulated $\cal M$ made a move $\mathfrak{a}.i$ 
where $\mathfrak{a}$ is a $\add$-cluster,  it makes the same move  $\mathfrak{a}.i$ in its real play. 
\item Whenever $\cal N$ sees that the simulated $\cal M$ made a move $\mathfrak{a}.c$ where $\mathfrak{a}$ is a $\ade$-cluster and $c$ is imageless,  $\cal N$  makes the  same move  $\mathfrak{a}.c$ in its real play,  and declares $c$ to be its own image.  
\item Whenever $\cal N$ sees that the simulated $\cal M$ made a move $\mathfrak{a}.c$ where $\mathfrak{a}$ is a $\ade$-cluster and $c$ is not imageless,  $\cal N$  makes the   move  $\mathfrak{a}.d$ in its real play, where $d$ is the image of $c$.  
\end{itemize} 
We claim that $\cal N$ is a logical solution of  $\ada^\mathfrak{c} xA(x)$. To see this, consider an arbitrary interpretation $^*$ and an arbitrary ``real'' play by $\cal N$. Let $\Gamma_{\cal N}$ be the run that has taken place in that play, and $\Gamma_{\cal M}$ be the run that has correspondingly taken place in $\cal M$'s  play as imagined by $\cal N$. Our goal is to show that $\Gamma_{\cal N}$ is a won run of $\bigl(\ada^\mathfrak{c} xA(x)\bigr)^*$. Let $R_{\cal M}$ be the $\Gamma_{\cal M}$-residue of $A(a)$, and let $c_1,\cdots,c_n$ be all constants occurring in $R_{\cal M}$. Notice that each such constant has acquired  an image at some time during the work of $\cal N$ (and never lost or changed it afterwards).   Let $R_{\cal N}$ be the result of replacing in $R_{\cal M}$ each occurrence of each $c_i\in\{c_1,\cdots,c_n\} $ by the image of $c_i$. Let $^\circ$ be an interpretation such that, for any atom $X$ of $R_{\cal M}$, $X^\circ$ is true iff so is $Y^*$, where $Y$ is the result of replacing all constants in $X$ by their images. 

An analysis of the work of $\cal N$, details of which are left to the reader, reveals  that   $R_{\cal N}$ is the $\Gamma_{\cal N}$-residue of $\ada^\mathfrak{c} x A(x)$, and that $R_{\cal N}^{*}$ is true iff  so is $R_{\cal M}^{\circ}$. Since $\cal M$ is a logical solution of $A(a)$, $\Gamma_{\cal M}$ is a won run of $A^\circ(a)$. By Lemma \ref{f1}, this implies that $R_{\cal M}^{\circ}$ is true. Hence so is  $R_{\cal N}^{*}$, which, again by Lemma \ref{f1}, implies that $\Gamma$ is a won run of $\bigl(\ada^\mathfrak{c} xA(x)\bigr)^*$, as desired.  
\end{proof}

\section{Rank  and purification} 
Our proofs in this section closely  follow those given in Section 7 of \cite{cl16}.
As in \cite{cl16}, $^na$ means ``tower of $a$'s of height $n$'' (tetration), defined inductively by $^1 a= a$ and $^{n+1}a=a^{( ^na)}$.   
\begin{definition}\label{rankdef}
The {\bf rank} $\Rank(C)$ of a cirquent $C$ is the number defined as follows:  

1. If $C$ is  a literal, then its rank is $1$.

2. If $C$ is  $A\add^\mathfrak{c} B$ or $A\adc^\mathfrak{c} B$, then its rank is $\Rank(A)+\Rank(B)$.

3. If $C$ is  $\ade^\mathfrak{c} xA$ or $\ada^\mathfrak{c}x A$, then its rank is $\Rank(A)+1$.

4. If $C$ is  $A\mlc B$, then its rank is $5^k$, where $k=\Rank(A)+\Rank(B)$.\footnote{In fact, a smaller number can be taken here and below instead of $5$, but why bother.}

5. If $C$ is $A\mld B$, then its rank is \ $^k5$, where $k=\Rank(A)+\Rank(B)$.
\end{definition}

Due to due to the monotonicity of the functions $x+y$, $x+1$, $5^x$ and $^x5$, we have:
\begin{lemma}\label{monot}
The rank function is {\bf monotone} in the following sense. Consider a cirquent $A$ with a subcirquent $B$. Assume $B'$ is a cirquent with {\em $\Rank(B')< \Rank(B)$}, and $A'$ is the result of replacing an occurrence of $B$ by $B'$ in $A$. Then {\em $\Rank(A')< \Rank(A)$}.
\end{lemma}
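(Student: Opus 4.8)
The plan is to prove the statement by a routine induction on the structure of the cirquent $A$, equivalently on the depth of the fixed occurrence of $B$ within $A$. The entire content of the argument is the observation already flagged just before the lemma: each of the four rank-combining operations $x+y$, $x+1$, $5^x$ and $^x5$ is strictly increasing in every argument over the positive integers. One also checks, by an immediate induction on Definition \ref{rankdef}, that every cirquent has rank a positive integer; this guarantees that all arguments fed to these functions lie in the range where they are defined and strictly monotone, so that no degenerate case can spoil the inequalities below.

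For the base case, suppose the fixed occurrence of $B$ is $A$ itself. Then $A'=B'$, so $\Rank(A')=\Rank(B')<\Rank(B)=\Rank(A)$ directly by hypothesis. For the inductive step, assume the occurrence of $B$ is a proper subcirquent of $A$; in particular $A$ is not a literal, so it has one of the compound forms of Definition \ref{rankdef}, and the occurrence of $B$ lies inside one of the immediate subcirquents of $A$. Write $A=A_1\circ A_2$ for a binary operator $\circ\in\{\adc^\mathfrak{c},\add^\mathfrak{c},\mlc,\mld\}$, and suppose the occurrence is inside $A_1$; the case where it is inside $A_2$, and the quantifier cases $A=\ada^\mathfrak{c}xA_1$, $A=\ade^\mathfrak{c}xA_1$, are handled in exactly the same way. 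Letting $A_1'$ be the result of performing the replacement inside $A_1$, the induction hypothesis gives $\Rank(A_1')<\Rank(A_1)$, and hence $\Rank(A_1')+\Rank(A_2)<\Rank(A_1)+\Rank(A_2)$. I then push this strict inequality through the appropriate rank clause: for $\adc^\mathfrak{c}$ and $\add^\mathfrak{c}$ the rank is this sum itself; for $\mlc$ I apply the strictly increasing $5^x$; for $\mld$ the strictly increasing $^x5$; and for the quantifier cases the strictly increasing $x+1$. In every case $\Rank(A')<\Rank(A)$, which closes the induction.

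I do not anticipate any genuine obstacle, since this is bookkeeping whose only real ingredient is strict monotonicity of the four functions. The single point meriting a line of care is the tetration clause: one must confirm that $^k5<{}^{k'}5$ whenever $1\le k<k'$. This follows from $^{n+1}5=5^{(^n5)}>{}^n5$, which holds because $5^m>m$ for every $m\ge 1$ and $^n5\ge 1$; strict monotonicity in the height then follows by transitivity. The remaining three functions are transparently strictly monotone, so the argument goes through without friction.
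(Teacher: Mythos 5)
Your proof is correct and takes essentially the same approach as the paper, which states the lemma as an immediate consequence of the strict monotonicity of $x+y$, $x+1$, $5^x$ and $^x5$ without writing out the induction. You have simply made explicit the routine structural induction (including the worthwhile check that tetration is strictly increasing in its height) that the paper leaves implicit.
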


\begin{definition}\label{pd}
 We say that a cirquent $E$ is {\bf pure} iff the following conditions are satisfied: 

1. $E$ has no surface occurrence  of $\bot$ unless $E$ itself is $\bot$. 

2. $E$ has no surface occurrence of $\mlc$ which is in the scope of $\mld$. 

3. $E$ has no surface occurrence of $\adc^\mathfrak{c}$ or $\ada^\mathfrak{c}$ (whatever cluster $\mathfrak{c}$) which is in the scope of $\mld$.
 

4. $E$ has no surface occurrence of the form $A_1\mld\cdots\mld A_n$ such that, for some atom $A$, both 
$A$ and $\neg A$ are among $A_1,\cdots,A_n$. 

5. $E$ has no surface occurrences of $\top$ unless $E$ itself is $\top$. 

6. If $E$ is of the form $A_1\mlc\cdots\mlc A_n$ ($n\geq 2$), then at least one $A_i$ ($1\leq i\leq n$) is neither $\adc$- nor $\ada$-rooted.  


7. If $E$ is of the form $A\adc^\mathfrak{c} B$, then neither $A$ nor $B$ contains the cluster $\mathfrak{c}$. 

8. If $E$ is of the form $\ada^\mathfrak{c} xA$, then   $A$ does not contain  the cluster $\mathfrak{c}$. 
\end{definition}

Below we describe a procedure which takes a cirquent $E$ and applies to it a series of modifications. Each modification changes the value of $E$ so that the old value of $E$ follows from the new value by one of the  rules  of $\mmm$ other than Choosing and Splitting. The procedure is divided into eight stages, and the purpose of each stage $i\in\{1,\cdots,8\}$ is to make $E$ satisfy the corresponding condition $\#i$ of Definition \ref{pd}.\vspace{5pt}   

{\bf Procedure Purification} applied to a cirquent $E$:  
Starting from Stage 1, each  of the following eight stages is a loop that should be iterated until it no longer modifies (the current value of) $E$; then the procedure goes to the next stage, unless the current stage was Stage 8, in which case the procedure terminates and returns (the then-current value of) $E$.  

{\em Stage 1}:   If $E$ has a surface occurrence of the form $\bot\mld A$ or  $A\mld\bot$, change the latter to $A$  
using Por-identity perhaps in combination with Por-commutativity.  Next, if $E$ has a surface occurrence of the form $\bot\mlc A$  or $A\mlc\bot$, change it to $\bot$  using Pand-domination   perhaps in combination with Pand-commutativity.

{\em Stage 2}: If $E$ has a surface occurrence of the form $(A\mlc B)\mld C$ or $C\mld(A\mlc B)$, change it to $(A\mld C)\mlc (B\mld C)$ using Pand-distribution perhaps in combination with Por-commutativity. 

{\em Stage 3}:   (a) If $E$ has a surface occurrence of the form  $A\adc^\mathfrak{c} B \mld C$ or $C\mld A\adc^\mathfrak{c} B $, change it to $(A\mld C)\adc^\mathfrak{c} (B \mld C)$ using Chand-distribution perhaps in combination with Por-commutativity. (b) Next, if $E$ has a surface occurrence of the form  $\ada^\mathfrak{c}xA  \mld B$ or $B\mld\ada^\mathfrak{c}xA$, change it to $\ada^\mathfrak{c}x (A \mld B )$ using Chall-distribution perhaps in combination with Por-commutativity.

{\em Stage 4}: If $E$ has a surface occurrence of the form $A_1\mld \cdots\mld A_n$ and, for some atom $A$, both 
$A$ and $\neg A$ are among $A_1,\cdots,A_n$, change $A_1\mld \cdots\mld A_n$ to $\top$ using Trivialization, perhaps in combination with Por-domination, Por-commutativity and Por-associativity.

{\em Stage 5}: If $E$ has a surface occurrence of the form $\top\mld A$ or $A\mld \top$, change it to  $\top$  using Por-domination   perhaps in combination with Por-commutativity. Next, if $E$ has a surface occurrence of the form
$\top \mlc  A$ or $A\mlc\top$,  change it to $A$  
using Pand-identity perhaps in combination with Pand-commutativity.  

{\em Stage 6}: In all three cases below, $\mathfrak{c}$ is a $\adc$-cluster not occurring in $E$. (a) If $E$ has a surface occurrence of the form $ A \adc^\mathfrak{a} B \mlc  C \adc^\mathfrak{b} D $, change it to 
$\bigl( (A\mlc  C \adc^\mathfrak{b} D  )\adc^\mathfrak{a}  (B\mlc  C \adc^\mathfrak{b} D  )\bigr) 
\adc^\mathfrak{c} \bigl( ( A\adc^\mathfrak{a} B \mlc C )\adc^\mathfrak{b} ( A\adc^\mathfrak{a} B \mlc D )\bigr)$  using Chandchotomy. 
  (b) Next, if $E$ has a surface occurrence of the form $\ada^\mathfrak{a}xA \mlc \ada^\mathfrak{b}yB $, change it to 
$\ada^\mathfrak{a}x (A \mlc \ada^\mathfrak{b}yB  )\adc^\mathfrak{c} \ada^\mathfrak{b}y  (\ada^\mathfrak{a} xA \mlc B   )$ using Challchotomy.  (c) Next, if $E$ has a surface occurrence of the form $ A\adc^\mathfrak{a} B \mlc \ada^\mathfrak{b}xC $ or 
$ \ada^\mathfrak{b}xC\mlc A\adc^\mathfrak{a} B $,      
  change it to $\bigl( (A\mlc \ada^\mathfrak{b}xC   )\adc^\mathfrak{a} (B\mlc \ada^\mathfrak{b}xC  )\bigr)\adc^\mathfrak{c} \ada^\mathfrak{b} x  ( A\adc^\mathfrak{a} B \mlc C   )$ using Chandallchotomy perhaps in combination with Pand-commutativity.

{\em Stage 7}: If $E$ is  of the form $X[A\adc^\mathfrak{c} B]\adc^\mathfrak{c} C$ (resp. $C\adc^\mathfrak{c} X[A\adc^\mathfrak{c} B]$), change it to $X[A]\adc^\mathfrak{c}C$ (resp. $C\adc^\mathfrak{c} X[B]$) using Left (resp. Right) chand-cleansing. 

{\em Stage 8}: If $E$ is of the form $ \ada^\mathfrak{c} xX[\ada^\mathfrak{c} yA(y)] $, change it to $ \ada^\mathfrak{c} xX[A(x)] $  using Chall-cleansing. 

\

\begin{lemma}\label{terminate}  
Each stage of the Purification procedure strictly reduces the rank of $E$.  
\end{lemma}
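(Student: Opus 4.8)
The plan is to reduce every case to a single local rank comparison and then invoke monotonicity. Each modification performed at any stage replaces one occurrence of a subcirquent $B$ (a surface occurrence in Stages 1--6, or a proper subcirquent of $E$ in Stages 7--8) by a cirquent $B'$, so by Lemma \ref{monot} it suffices to verify, case by case, that $\Rank(B')<\Rank(B)$: this immediately gives $\Rank(E_{\mathrm{new}})<\Rank(E_{\mathrm{old}})$. Writing $a=\Rank(A)$, $b=\Rank(B)$, etc.\ for the ranks of the indicated components (each $\geq 1$), I would just read off the two ranks from Definition \ref{rankdef} in each case. The only arithmetic inputs needed are three elementary growth facts, all immediate by induction: (i) $y+1<5^y$ and $2y<5^y$ for every $y\geq 1$; (ii) the function $n\mapsto {}^{n}5$ is strictly increasing and satisfies $n<{}^{n}5$; and (iii) $5^{x}$ is strictly increasing.

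The structurally trivial stages are fastest. In Stage 1 the replacements of $\bot\mld A$ by $A$ and of $\bot\mlc A$ by $\bot$ drop rank since $\Rank(\bot\mld A)={}^{1+a}5>a$ and $\Rank(\bot\mlc A)=5^{1+a}>1$. In Stages 4 and 5 a $\mld$- or $\mlc$-rooted occurrence is replaced by a literal of rank $1$ (or, for $\top\mlc A$ replaced by $A$, note $5^{1+a}>a$), and since any $\mld$-rooted cirquent has rank $\geq {}^{2}5$ the inequality is clear. Stages 7 and 8 are pure deletions: $A\adc^\mathfrak{c}B$ (rank $a+b$) is replaced by $A$ (rank $a<a+b$), and $\ada^\mathfrak{c}yA(y)$ (rank $a+1$) by $A(x)$ (rank $a$, since $\Rank$ is insensitive to which variable occurs); monotonicity finishes these.

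The genuine content lies in the distribution and chotomy stages. For Stage 3(b), replacing $\ada^\mathfrak{c}xA\mld B$ by $\ada^\mathfrak{c}x(A\mld B)$, one compares ${}^{a+b}5+1$ with ${}^{a+b+1}5=5^{({}^{a+b}5)}$ and applies $y+1<5^y$. For Stage 3(a), and especially Stage 2, where a $\mlc$ is pushed inside a $\mld$, the comparison is between $5^{({}^{a+c}5)+({}^{b+c}5)}$ and a single tall tower ${}^{5^{a+b}+c}5$; here I would bound the sum of two towers by twice the larger, use $2y<5^y$ to absorb the factor $2$ into one more tetration level, and then compare heights by the monotonicity of $n\mapsto{}^{n}5$, the height inequality reducing to $\max(a,b)+2\leq 5^{a+b}$, which holds since $a+b\geq 2$. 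The Stage 6 chotomy rules are purely exponential: in each case the occurrence being replaced has rank $5^{s}$ for the appropriate $s$ (namely $s=a+b+c+d$, $s=a+b+2$, or $s=a+b+c+1$), while its replacement has rank a sum of finitely many terms each of the form $5^{s-x}$ with $x\geq 1$ (plus at most a small additive constant), so the new rank is at most $4\cdot 5^{s-1}+2<5^{s}$.

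The main obstacle is thus not conceptual but the Stage 2 / Stage 3(a) tetration estimate: one must check carefully that collapsing the two towers into one costs only a bounded number of extra tetration levels, a number still dwarfed by the height $5^{a+b}+c$ produced on the other side. Everything else is a routine read-off from Definition \ref{rankdef} closed by Lemma \ref{monot}.
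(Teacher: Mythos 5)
Your proposal is correct and takes essentially the same route as the paper: reduce each stage to a single local comparison $\Rank(B')<\Rank(B)$, close via Lemma \ref{monot}, and verify the arithmetic case by case (your Stage 3(b), 6 and 8 computations match the paper's exactly). The only difference is coverage: the paper explicitly checks only the new quantifier-related stages (3(b), 6(b,c), 8) and cites Lemma 7.4 of the predecessor paper \cite{cl16} for the purely propositional ones, whereas you verify everything directly—including the Stage 2 tetration estimate, where your bound (sum of two towers $\leq$ twice the larger, absorb the factor $2$ by $2y<5^y$ into one extra tetration level, then compare heights) is sound, even though the displayed comparison formula you give there is the Stage 2 one rather than the simpler Stage 3(a) comparison ${}^{a+c}5+{}^{b+c}5<{}^{a+b+c}5$, which your method also handles.
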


\begin{proof} Each stage replaces an occurrence of a subcirquent $A$ of $E$ by some cirquent $B$. In view of Lemma \ref{monot}, in order to show that such a replacement reduces the rank of $E$, it is sufficient to show that $\Rank(B)<\Rank(A)$.   Here we shall only consider Stages 3(b), 6(b,c) and 8, as all other stages or cases are covered in the proof of Lemma 7.4 of \cite{cl16}.

{\em Stage 3(b)}: With $k$ abbreviating $\Rank(A)+\Rank(B)$, the rank of $\ada^\mathfrak{c}xA  \mld B$ or $B\mld\ada^\mathfrak{c}xA$ is $^{k+1}5$, and the rank of   $\ada^\mathfrak{c}x (A \mld B )$ is $^{k }5+1$. The latter is clearly smaller than the former. 

{\em Stage 6(b)}:   With $k$ abbreviating $\Rank(A)+\Rank(B)$, the rank of $\ada^\mathfrak{a}xA \mlc \ada^\mathfrak{b}yB $   is $5^{k +2}$, and the rank of    
$\ada^\mathfrak{a}x (A \mlc \ada^\mathfrak{b}yB  )\adc^\mathfrak{c} \ada^\mathfrak{b}y  (\ada^\mathfrak{a} xA \mlc B   )$
is $5^{k+1 }+5^{ k+1 }   $. Of course the latter is smaller than the former.

{\em Stage 6(c)}:  With $a$, $b$ and $c$ standing for the ranks of $A$, $B$ and $C$, respectively, the rank of $ A \adc^\mathfrak{a} B \mlc \ada^\mathfrak{b}xC$ or  $\ada^\mathfrak{b}xC\mlc  A \adc^\mathfrak{a} B $ is $5^{a+b+c+1}$, and the rank of    
$\bigl( (A\mlc \ada^\mathfrak{b}xC   )\adc^\mathfrak{a}  (B\mlc \ada^\mathfrak{b}xC )\bigr)\adc^\mathfrak{c} \ada^\mathfrak{b}x ( A \adc^\mathfrak{a} B \mlc C  ) $
is $5^{ a+c+1 }+5^{b+c+1 }+5^{ a+b+c }  +1$. Obviously the latter is smaller than the former.

{\em Stage 8}: Each iteration of this stage replaces a subcirquent $\ada^\mathfrak{c} yA(y)$ by $A(x)$. Since the rank of  $ A(x)$ is obviously the same as the rank of  $A(y)$, the rank $\Rank\bigl(A(x)\bigr)+1$ of $\ada^\mathfrak{c} xA(x)$ is greater than the rank $\Rank\bigl(A(y)\bigr)$ of $A(y)$.   
 \end{proof}

Where $A$ is the initial value of $E$ in the Purification procedure and $B$ is its final value (which exists by Lemma \ref{terminate}), we call $B$ the {\bf purification} of $A$.

\begin{lemma}\label{pl}
 For any closed cirquent $E$ and its purification $F$, we have: 

1. If $F$ is provable in $\mmm$, then so is $E$. 

2. $E$ is   valid iff so is $F$. 

3. $F$ is pure. 

4. The rank of $F$ does not exceed the rank of $E$.  
\end{lemma}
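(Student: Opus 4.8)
The plan is to prove the four parts of Lemma~\ref{pl} essentially by tracking what the Purification procedure does, leaning on the three preceding lemmas. The engine behind everything is the description of Procedure Purification: it transforms the initial cirquent $A$ into the final cirquent $B=F$ through finitely many steps, and crucially, each step is set up so that \emph{the old value follows from the new value by one rule of $\mmm$ other than Choosing and Splitting}. Part~4 is the easiest: by Lemma~\ref{terminate} each stage strictly reduces the rank, and since stages run until they no longer modify $E$, the rank is non-increasing throughout, so $\Rank(F)\leq\Rank(E)$. (One should note that the procedure terminates at all, which is also guaranteed by Lemma~\ref{terminate}, since the rank is a natural number that strictly decreases at each productive step.)

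For Part~1, I would argue by a straightforward induction on the number of modification steps the procedure performs in turning $E$ into $F$. Each modification takes a current value $E'$ to a next value $E''$ in such a way that $E'$ is obtainable from $E''$ by a single application of some rule of $\mmm$. Reading the whole procedure backwards, this means $E$ is obtainable from $F$ by a chain of such rule applications, i.e.\ there is a derivation of $E$ from $F$ in $\mmm$. Hence if $\mmm\vdash F$, we may prepend that derivation of $F$ and then apply the recorded rules in sequence to conclude $\mmm\vdash E$; this is exactly the claim of Part~1.

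For Part~2, I would use Lemma~\ref{pres}. Observe that every rule employed by the Purification procedure is, by design, \emph{not} one of the Choosing rules (the stages use only Identity, Domination, Distribution, Trivialization, the three Chotomy rules, the Cleansing rules, and Commutativity/Associativity). Therefore, for each single step $E''\leadsto E'$ used to recover the old value, Lemma~\ref{pres} applies in \emph{both} directions: part~1 of that lemma gives premises-to-conclusion preservation, and part~2 gives conclusion-to-premises preservation precisely because Choosing is excluded. Consequently each step preserves validity in both directions, and chaining these equivalences along the whole sequence of modifications yields that $E$ is valid iff $F$ is valid.

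Part~3, that $F$ is pure, is where I expect the real work to lie, and it is a verification rather than a single clever argument. The purpose of each Stage~$i$ is to enforce the corresponding condition~$\#i$ of Definition~\ref{pd}, and one runs each stage as a loop until it no longer applies, so after Stage~$i$ completes, condition~$\#i$ holds of the current value. The delicate point—the main obstacle—is to check that later stages do not \emph{reintroduce} violations of conditions established by earlier stages; equivalently, that the conditions are stably preserved under the transformations of subsequent stages, so that all eight hold simultaneously at the end. For instance, one must confirm that the Chotomy rewrites of Stage~6 (which create new $\adc$-rooted subcirquents) do not manufacture fresh surface $\bot$'s, fresh $\mlc$ or choice operators inside the scope of $\mld$, or fresh clashing literal disjunctions that Stages~1--5 had cleared, and that the Cleansing of Stages~7--8 likewise respects the earlier conditions. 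Since the deep (non-surface) structure is never touched by these surface-level rewrites, and each stage's rewrite only rearranges or eliminates surface material in a controlled way, this preservation goes through; I would organize the argument as a checklist confirming, for each pair (earlier condition, later stage), that the later stage cannot violate the earlier condition, and leave the routine case-by-case inspections to the reader just as the proof of Lemma~\ref{terminate} does.
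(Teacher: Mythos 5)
Your proposal is correct and follows essentially the same route as the paper's own proof: clause 1 by reversing the recorded rule applications to get a derivation of $E$ from $F$, clause 2 from both clauses of Lemma \ref{pres} together with the fact that Choosing is never used by Purification, clause 3 by the stage-by-stage ``establish condition $\#i$, then check later stages preserve it'' argument with the case analysis left routine, and clause 4 directly from Lemma \ref{terminate}. The only blemish is your justifying sentence in Part 3 that the ``deep (non-surface) structure is never touched'' --- Stages 7 and 8 do rewrite material lying under choice operators --- but since conditions 1--6 of Definition \ref{pd} concern only surface occurrences and a $\adc$- or $\ada$-rooted cirquent has no nontrivial surface structure, the checklist you describe still goes through, at the same level of rigor as the paper's ``routine examination.''
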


\begin{proof} {\em Clause 1}: When obtaining $F$ from $E$, each transformation performed during the Purification procedure applies, in the conclusion-to-premise direction, one of the inference rules of $\mmm$. Reversing the order of those transformations, we get a derivation of $E$ from $F$ in $\mmm$. Appending that derivation to a proof of $F$ 
(if one exists) yields a proof of $E$. 

{\em Clause 2}: Immediate from the two clauses of Lemma \ref{pres} and the fact that, when obtaining $F$ from $E$ 
using the Purification procedure, Choosing is never used. 

{\em Clause 3}: One by one, Stage 1 eliminates all surface occurrences of $\bot$ in $E$ (unless $E$ itself is $\bot$). So, at the end of the stage, $E$ satisfies condition 1 of Definition \ref{pd}. None of the subsequent steps make $E$ violate that condition, so $F$, too, satisfies that condition.
 Similarly, a routine examination of the situation reveals that Stage 2 (resp. 3, \ldots, resp. 8) of the Purification procedure makes $E$ satisfy condition 2 (resp. 3, \ldots, resp. 8) of Definition \ref{pd}, and $E$ continues to satisfy that condition throughout the rest of the stages. So, $F$ is pure.   

{\em Case 4}:  Immediate from Lemma  \ref{terminate}. \end{proof}

\section{The soundness and completeness of $\mmm$}
\begin{theorem}\label{theo}
A closed cirquent $E$ is   valid if (soundness) and only if (completeness) $\mmm\vdash E$.
\end{theorem}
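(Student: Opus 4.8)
The plan is to prove the two directions separately, with \emph{soundness} being routine and \emph{completeness} carrying all the weight. For soundness, I would induct on the length of an $\mmm$-proof of $E$: the sole axiom $\top$ is visibly valid (the empty strategy wins), and by clause~1 of Lemma~\ref{pres} every rule preserves validity in the premises-to-conclusion direction, so every line of the proof, $E$ included, is valid. For completeness the first step is to reduce to pure cirquents. Given a valid closed $E$, let $F$ be its purification. By Lemma~\ref{pl}, $F$ is valid (clause~2), pure (clause~3), and of rank $\leq\Rank(E)$ (clause~4), while provability of $F$ entails provability of $E$ (clause~1). Hence it suffices to prove the core statement: \emph{every pure valid closed cirquent is provable}, which I would establish by strong induction on $\Rank$.

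The inductive step proceeds by case analysis on the root of the pure valid $F$, using purity to collapse the case list dramatically. If $F=\top$, it is the axiom. If $F$ is $\adc^\mathfrak{c}$-rooted, say $F=A\adc^\mathfrak{c}B$, then validity forces both $A$ and $B$ to be valid (freeze the environment's first move as $\mathfrak{c}.0$ resp.\ $\mathfrak{c}.1$ and project the solution), purity condition~7 supplies the side condition $\mathfrak{c}\notin A,B$, and since $\Rank(A),\Rank(B)<\Rank(F)$ I purify $A,B$, invoke the induction hypothesis (via Lemma~\ref{pl}) and close with Chand-splitting. If $F$ is $\ada^\mathfrak{c}xA(x)$-rooted, validity gives that $A(b)$ is valid for every constant $b$; choosing a fresh $a$ (with $\mathfrak{c}\notin A(x)$ by condition~8 and $a\notin A(x)$ by freshness), I purify $A(a)$, apply the hypothesis, and close with Chall-splitting. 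The remaining roots are handled by a single structural observation: a pure $\mld$-rooted cirquent has, by conditions~2 and~3, only literals and $\add$/$\ade$-rooted subcirquents at its surface, and a pure $\mlc$-rooted cirquent must, by condition~6, contain a surface conjunct that is neither $\adc$- nor $\ada$-rooted. In either case, if no surface $\add$- or $\ade$-cluster is present, the relevant surface material degenerates to a disjunction of non-complementary nonlogical literals (conditions~1, 4, 5), which is falsifiable; by clause~2 of Definition~\ref{semantics} this makes $F$ non-valid, contradicting the hypothesis. Thus a pure valid $F$ that is not $\top$ and not $\adc$/$\ada$-rooted must exhibit a surface $\add$- or $\ade$-cluster, and the entire remaining analysis funnels into the Choosing case.

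The Choosing case is the one genuine obstacle. Let $\mathfrak{c}$ be a surface $\add$-cluster of $F$, and let the Left- and Right-choosing premises be $F_0$ and $F_1$ (obtained by replacing every surface $\add^\mathfrak{c}$-rooted subcirquent by its left, resp.\ right, component); the $\ade$-subcase is analogous with premises $F_a$ indexed by constants $a$. The claim I must prove is that validity of $F$ forces validity of one of these premises. I would argue the contrapositive. Given any candidate solution $\cal H$ for $F$, run it against the \emph{idle} environment; since $\cal H$'s moves then depend on nothing external, the value $v\in\{0,1\}$ to which $\cal H$ resolves $\mathfrak{c}$ (if any) is fixed and interpretation-independent. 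If $F_0,F_1$ are both non-valid and $\cal H$ resolves $\mathfrak{c}$ to $v$, I exploit the fact that legal runs carry no temporal ordering (condition~4 only forbids a second move in a cluster): the environment can stay idle until $\cal H$ has committed to $\mathfrak{c}.v$, after which the position is effectively $F_v$, and it switches to an interpretation-plus-environment refuting the strategy that $\cal H$ now induces on $F_v$—which exists precisely because $F_v$ is non-valid. If $\cal H$ never resolves $\mathfrak{c}$ against the idle environment, then the surface $\add^\mathfrak{c}$ subcirquents become $\bot$ in the $\Gamma$-residue; by Lemma~\ref{f1} the residue must then be a tautology, and a short argument shows that $\bot$ sits in an inessential (disjunctive) slot, so replacing it by the corresponding resolvent keeps $\cal H$ winning and makes $F_0$ valid—contradiction. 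Extracting this interpretation-independent ``correct choice'' from a logical solution, and transferring a premise-refutation across the delayed resolution, is the crux I expect to require the most care.

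It remains to close the induction after the Choosing step. Replacing each $A_i\add^\mathfrak{c}B_i$ by a component, or each $\ade^\mathfrak{c}x_iA_i(x_i)$ by $A_i(a)$, strictly lowers rank by Lemma~\ref{monot}, so the chosen valid premise $F'$ satisfies $\Rank(F')<\Rank(F)$; purifying $F'$ yields a pure valid cirquent of rank $\leq\Rank(F')<\Rank(F)$ (Lemma~\ref{pl}, clauses~2--4), to which the induction hypothesis applies, whence $F'$ is provable (Lemma~\ref{pl}, clause~1), and one application of the appropriate Choosing rule delivers $F$. Composing all cases completes the induction and hence the proof of completeness; the quantifier subcases (the $\ada$-Splitting case and the $\ade$-Choosing case, together with the $\mlc$-over-$\ada$ structural reductions absorbed by purification) are the genuinely new content relative to the propositional argument of \cite{cl16}.
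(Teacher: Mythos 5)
Your skeleton is the paper's skeleton: soundness from clause 1 of Lemma \ref{pres}, completeness by purification (Lemma \ref{pl}) plus induction on rank, the $\adc$/$\ada$ cases closed by Splitting together with clause 2 of Lemma \ref{pres}, and everything else funnelled into a Choosing step extracted from a logical solution played against an idle environment. But the step you yourself call the crux is wrong as you state it. You fix one surface $\add$-cluster $\mathfrak{c}$ of $F$ \emph{in advance} and claim that validity of $F$ forces validity of one of the two $\mathfrak{c}$-premises $F_0,F_1$. That claim is false. Consider
\[
F\ =\ (p\add^{\mathfrak{c}}q)\ \mld\ \ade^{\mathfrak{b}}x\bigl(\gneg p\adc^{\mathfrak{a}}\gneg q\bigr).
\]
This cirquent is closed and pure (the $\adc^{\mathfrak{a}}$ is not a surface occurrence, being inside $\ade^{\mathfrak{b}}x$), and it is valid: the machine moves $\mathfrak{b}.0$ at once and then, if and when the environment moves $\mathfrak{a}.i$, answers $\mathfrak{c}.i$; the possible residues are $\bot\mld\top$, $p\mld\gneg p$ and $q\mld\gneg q$, true under every interpretation. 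Yet both $\mathfrak{c}$-premises are invalid: in $F_0=p\mld\ade^{\mathfrak{b}}x(\gneg p\adc^{\mathfrak{a}}\gneg q)$ any machine must at some interpretation-independent moment resolve $\mathfrak{b}$ (else the residue is $p\mld\bot$), after which the environment answers $\mathfrak{a}.1$, leaving the falsifiable $p\mld\gneg q$; symmetrically for $F_1$. The precise place your contrapositive breaks is your second subcase: from ``${\cal H}$ never resolves $\mathfrak{c}$ against the idle environment'' you argue by monotonicity that replacing the resulting $\bot$'s by resolvents ``keeps ${\cal H}$ winning and makes $F_0$ valid''. That monotonicity transfer is only run-by-run: it shows the single idle run remains won for $F_0$, but against non-idle environments ${\cal H}$ may resolve $\mathfrak{c}$ to $1$ (exactly what happens above), and such runs, won for $F$, need not be won for $F_0$. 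Validity of $F_0$ does not follow, and in the example it genuinely fails.

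The repair is precisely what the paper does: do not pick the cluster yourself; let the machine pick it. Run the purported logical solution against the idle environment; its behavior there is interpretation-independent, and by purity conditions 1, 4, 5 it cannot leave \emph{all} surface $\add$- and $\ade$-clusters unresolved (the residue would then be a falsifiable disjunction of non-complementary literals and $\bot$'s). So it resolves \emph{some} surface $\add$- or $\ade$-cluster in some specific way, and the Choosing premise determined by \emph{that} cluster and \emph{that} choice inherits validity --- the machine itself, modulo the order-of-moves-irrelevance the paper invokes, is a logical solution of it (this is the paper's Cases 2, 3, 6 and Subcases 7.1--7.3). With the cluster selected by the machine rather than fixed beforehand, your problematic subcase disappears and the rank induction closes exactly as in your final paragraph; the rest of your proposal (the Splitting cases, re-purification via Lemma \ref{pl}, the rank bookkeeping) matches the paper's proof.
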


\begin{proof} The soundness part is immediate from clause 1 of Lemma \ref{pres} and the fact that the axiom $\top$ is   valid. The rest of this section is devoted to a proof of the completeness part. Pick an arbitrary closed cirquent $E$, and let $F$ be its purification.   We proceed by induction on the 
rank of $E$.  

By clauses 3-4 of Lemma \ref{pl}, $F$ is a pure  cirquent whose rank does not exceed that of 
$E$. We shall implicitly rely on this fact below. In view of $F$'s being pure, after some analysis it is clear that   one of the following conditions should be satisfied:

\begin{description}
  \item[Condition 0.] $F$ is either $\bot$ or a nonlogical literal.\vspace{-7pt}  
  \item[Condition 1.] $F$ is $\top$.\vspace{-7pt}
  \item[Condition 2.] $F$ is of the form $A_0 \add^\mathfrak{c}  A_1$.\vspace{-7pt}
  \item[Condition 3.] $F$ is of the form $\ade^\mathfrak{c} xA(x)$.\vspace{-7pt}
  \item[Condition 4.] $F$ is of the form $A_0 \adc^\mathfrak{c} A_1$, and neither $A_0$ nor $A_1$ contains the cluster $\mathfrak{c}$.\vspace{-7pt} 
  \item[Condition 5.] $F$ is of the form $\ada^\mathfrak{c} xA(x)$, and $A(x)$ does not contain the cluster $\mathfrak{c}$.\vspace{-7pt}
  \item[Condition 6.] $F$ is of the form $A_1\mld\cdots\mld A_n$ ($n\geq 2$),  where each disjunct  is either    a nonlogical literal, or $\add$-rooted, or $\ade$-rooted; besides, for no atom $A$ do we have that both $A$ and $\neg A$ are among $A_1,\cdots,A_n$.\vspace{-7pt}
  \item[Condition 7.] $F$ is of the form $B_1\mlc\cdots\mlc B_m$ ($m\geq 2$), where at least one conjunct $B_e$ is  either   (0) a nonlogical literal, or (1) $\add$-rooted, or  (2) $\ade$-rooted, or (3) satisfies   Condition 6 in the role of $F$. 
\end{description}  

Assume $E$ is valid. We want to show that then $E$ is provable. For this, in view clause 1 of Lemma \ref{pl}, it is sufficient to show that $F$ is provable. Keep in mind that, by clause 2  of Lemma \ref{pl},  $F$ is   valid.   This immediately rules out Condition 0, because, of course, neither $\bot$ nor nonlogical literals are  valid. So, we only need to show that $F$ is provable in each of the following seven cases:

{\em Case 1}:  $F$ is $\top$  as in Condition 1. Then $F$ is an axiom and hence provable. 

{\em Case 2}: $F$ is $A_0 \add^\mathfrak{c} A_1$ as in Condition 2. Let $\cal M$ be a logical solution of $F$.  
Consider the work of $\cal M$ in the scenario where the environment does not move until $\cal M$ makes the move $\mathfrak{c}.i$ for one of $i\in\{0,1\}$. Sooner or later $\cal M$ has to make such a move, for otherwise $F$ would be lost 
due to being $\add^\mathfrak{c}$-rooted.  Since in the games that we deal with the order of moves is irrelevant, without loss of  generality we may assume that the move $\mathfrak{c}.i$ is made by $\cal M$ before any other moves. Let $D$ be the result of replacing in $F$ all subcirquents of the form $X_0\add^\mathfrak{c} X_1$ by $X_i$.  Observe that, after the move $\mathfrak{c}.i$ is made, in any scenario that may follow, $\cal M$ has to continue and win $D$. This means that  $\cal M$ is a logical solution of (not only $F$ but also) $D$. Thus, $D$ is valid. The rank of $D$ is of course smaller than that of $F$. Hence, by the induction hypothesis, $D $  is provable. Then so is $F$ because it follows from $D $ by Chor-choosing. 

{\em Case 3}:  $F$ is $\ade^\mathfrak{c} xA(x)$ as in Condition 3. This case is rather similar to the preceding one. Let  $\cal M$ be a logical solution of $F$. Consider the work of $\cal M$ in the scenario where the environment does not move until $\cal M$ makes the move $\mathfrak{c}.a$ for some constant $a$.
 Sooner or later $\cal M$ has to make such a move, for otherwise $ F$ would be lost  due to   being $\ade^\mathfrak{c}$-rooted. As in case 2, we may assume that the move  $\mathfrak{c}.a$  is made  before any other  moves. Let $D$ be the result of replacing in $F $ all subcirquents of the form $\ade^\mathfrak{c} yX(y)$ by $X(a)$.  After the move $\mathfrak{c}.a$ is made, in any scenario that may follow, $\cal M$ has to continue and win $D$. This means that  $\cal M$ is a logical solution of  $D$. The rank of  $D$ is smaller than that of $F$. Hence, by the induction hypothesis, $D$  is provable. Then so is $F$, as it follows from $D$ by Chexists-choosing. 

{\em Case 4}: $F=A_0 \adc^\mathfrak{c} A_1$ is as in Condition 4. By clause 2 of Lemma \ref{pres}, both $A_0$ and $A_1$ are valid, because $F$ follows from either one by Chand-splitting. Both $\Rank(A_0)$ and $\Rank(A_1)$ are smaller  than  $\Rank(F)$. Hence, by the induction hypothesis, both $A_0$ and $A_1$ are provable. Therefore, by Chand-splitting, so is $F$.

{\em Case 5}: $F=\ada^\mathfrak{c} xA(x)$ is as in condition 5.   Let $c$ be a constant not occurring in $F$. By clause 2 of Lemma \ref{pres},  $A(c)$ is   valid, because $F$ follows from it by Chall-splitting. The rank of $A(c)$ is smaller  than that of $F$. Hence, by the induction hypothesis,   $A(c)$ is provable. Therefore, by Chall-splitting, so is $F$.

{\em Case 6}:  $F=A_1\mld\cdots\mld A_n$ is as in Condition 6.    Not all of the cirquents $A_1,\ldots,A_n$ can be literals, for otherwise $F$ would be automatically lost under an interpretation that makes   all those literals false, contrary to our assumption that $F$  is   valid. With this observation in mind, without loss of generality, we may assume that, for some 
$k,m$ with $1\leq k+m \leq n$, the first $k$ cirquents $A_1,\cdots,A_k$ are of the form $B_{0}^{1}\add^{\mathfrak{b}_1} B_{1}^{1}$, \ldots, 
$B_{0}^{k}\add^{\mathfrak{b}_k} B_{1}^{k}$, the next $m$ cirquents are of the form $\ade^{\mathfrak{c}_1}x_1C_1(x_1), \cdots, \ade^{\mathfrak{c}_m}x_mC_m(x_m) $, and the remaining $n-k-m$ cirquents  are literals. Let $\cal M$ be a 
logical solution of $F$. Consider the work of $\cal M$ in the scenario where the environment does not move until $\cal M$ makes either (a)   the move $\mathfrak{b}_j.i$  for some $j\in\{1,\cdots,k\}$ and  $i\in\{0,1\}$,  or (b) 
the move $\mathfrak{c}_j.a$  for some $j\in\{1,\cdots, m\}$ and  $a\in\mathbb{N}$. At some point, $\cal M$ should indeed make such a move, for otherwise $F$ 
would be lost under an(y) interpretation which makes all of the literal cirquents $A_{k+m+1},\cdots,A_n$ false.
  In case (a), let $D$ be the result of replacing in $F$ every subcirquent of the form $X_0 \add^{\mathfrak{b}_j} X_1$  by $X_i$; in case (b), let $D$ be the result of replacing in $F$ every subcirquent of the form $ \ada^{\mathfrak{c}_j} y X (y)$  by $X (a)$.  With some analysis left to the reader, $\cal M$ can be seen to be a logical solution of $D$. Thus, $D$ is   valid. The rank of $D$ is smaller than that of $F$ and hence, by the induction hypothesis, $D$ is provable. But then so is $F$, because it follows from $D$ by Choosing. 

{\em Case 7}: $F=B_1\mlc\cdots\mlc B_m$ and $B_e$ are is as in Condition 7. 
 The  validity of $F$, of course, implies that 
$B_e$, as one of its $\mlc$-conjuncts, is also  valid. This rules out the possibility that $B_e$ is a nonlogical literal, because, as we observed earlier, a nonlogical literal cannot be   valid. Therefore  we are left with one of the following three possible subcases, corresponding to subconditions (1), (2) and (3) of Condition 7:

{\em Subcase 7.1}: $B_e$ is of the form $C_0\add^\mathfrak{c} C_1$. The argument given for Subcase 5.1 in the proof of Theorem 7.6 of \cite{cl16} goes through without any changes.

{\em Subcase 7.2}: $B_e$ is of the form $\ade^\mathfrak{c} xC(x)$. Let $\cal M$ be a logical solution of $F$.  Consider the work of $\cal M$ in the scenario where the environment does not move until $\cal M$ makes the move   $\mathfrak{c}.a$ for some constant $a$ (otherwise $F$ would be lost). 
  Let $D$ be the result of replacing  in $F$  every subcirquent of the form $ \ade^\mathfrak{c} y X(y)$ 
 by $X(a)$. Then, as in  Case 3, $\cal M$ can be seen to be a logical solution of $D$, meaning that  $D$ is   valid. The rank of $D$ is smaller than that of $F$ and hence, by the induction hypothesis, $D $ is provable. But then so is $F$, because it follows from $D $ by Chexists-choosing. 

{\em Subcase 7.3}: $B_e$ satisfies   Condition 6 in the role of $F$.  This 
case is very similar to Case 6 and, almost literally repeating our reasoning in the latter, we find that $F$ 
is provable. \end{proof}

\section{The decidability of $\mmm$} 
\begin{theorem}\label{theor}
$\mmm$   is decidable. Namely,  the algorithm   DECISION   described below accepts a closed cirquent $E$ if $\mmm\vdash E$ and rejects if $\mmm\not\vdash E$. 
\end{theorem}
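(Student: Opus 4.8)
The plan is to read the algorithm DECISION directly off the completeness argument of Theorem \ref{theo}, which is already a recursive reduction: every closed cirquent is handled by first passing to its purification and then branching on which of Conditions 0--7 the (pure) purification satisfies. So DECISION$(E)$ will first compute the purification $F$ of $E$ (which it can, since Purification is an evidently effective rewriting procedure that terminates by Lemma \ref{terminate}), determine the applicable condition by inspecting the root structure of $F$, and then recurse on cirquents of strictly smaller rank. Correctness of the whole scheme reduces, via the equivalence ``$E$ valid iff $\mmm\vdash E$'' of Theorem \ref{theo} together with clause 2 of Lemma \ref{pl}, to showing that DECISION accepts $E$ iff $F$ is valid; this I would prove by induction on $\Rank(E)$, using that $\Rank(F)\leq\Rank(E)$ (clause 4 of Lemma \ref{pl}) and that every cirquent arising in a recursive call has rank strictly below $\Rank(F)$, exactly as computed in the cases of the completeness proof.

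The recursive clauses mirror the seven cases. In the base cases DECISION rejects when $F$ is $\bot$ or a nonlogical literal (Condition 0), since such $F$ is not valid, and accepts when $F=\top$ (Condition 1). The two ``splitting'' conditions are conjunctive: for $F=A_0\adc^\mathfrak{c}A_1$ (Condition 4) DECISION accepts iff it accepts both $A_0$ and $A_1$, and for $F=\ada^\mathfrak{c}xA(x)$ (Condition 5) it accepts iff it accepts $A(c)$ for a single constant $c$ fresh to $F$; the conclusion-to-premises direction of correctness here is exactly clause 2 of Lemma \ref{pres} for Chand- and Chall-splitting, and the premises-to-conclusion direction is clause 1. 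The ``choosing'' conditions are existential: for $F=A_0\add^\mathfrak{c}A_1$ (Condition 2) DECISION accepts iff it accepts at least one of the cluster-resolvents $D_0,D_1$; for the disjunctive Condition 6 it accepts iff some legal initial Choosing move (resolving one of the finitely many surface $\add$- or $\ade$-clusters) yields an accepted cirquent; and for the conjunctive Condition 7 it focuses on the distinguished conjunct $B_e$ and makes the corresponding choice inside it, recursing on the result. In each choosing case the forward direction is supplied by the completeness proof and the backward direction by the premises-to-conclusion half (clause 1) of Lemma \ref{pres}.

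The one genuinely new point, and the main obstacle, is that the choice-existential cases (Condition 3, and the $\ade$-subcases of Conditions 6 and 7) involve choosing an arbitrary constant $a\in\mathbb{N}$, so naively they present infinite branching. The plan is to tame this by a constant-renaming symmetry: the game $\bigl(\ade^\mathfrak{c}xA(x)\bigr)^*$ and, more conveniently, its $\Gamma$-residues are invariant under any permutation of the constants not occurring in $F$, so that if the resolvent $D_a$ is valid for some $a$ fresh to $F$, then $D_{a'}$ is valid for every fresh $a'$. This is precisely the invariance underlying the image-function construction of $\cal N$ in the Chall-splitting case of Lemma \ref{pres}, and via Lemma \ref{f1} it is cleanest to state at the level of residues. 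Consequently DECISION need only try the finitely many constants occurring in $F$ together with one fixed fresh representative, and some choice succeeds iff one of these does. With finite branching thus secured and strict rank decrease guaranteeing termination, DECISION halts on every input and, by the induction above, accepts exactly the valid, equivalently (by Theorem \ref{theo}) the provable, closed cirquents. The only remaining work is routine: the rank bookkeeping is already carried out in the completeness proof, and the renaming-invariance claim needs only to be made precise.
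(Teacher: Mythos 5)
Your proposal is correct and takes essentially the same approach as the paper: purification first, the same case analysis on Conditions 0--7 with recursion terminating by strict rank decrease, and the same taming of the infinite branching in the $\ade$-cases by testing only the constants occurring in $F$ plus a single fresh one. The only difference is one of emphasis: you sketch the constant-renaming invariance justifying the fresh-constant trick, a verification the paper explicitly leaves to the reader.
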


\begin{proof} Consider an arbitrary closed cirquent $E$. The algorithm  DECISION  given below is a recursive one. It terminates because  every recursive call strictly decreases the rank of the cirquent that is being processed. This is how the algorithm acts on input $E$:

First, using the Purification algorithm,  DECISION   constructs the purification $F$ of $E$. Note   that, in  view of Theorem \ref{theo}, ``provable'' and ``valid'' can (and will) be used interchangeably; additionally, by clause 2 of Lemma \ref{pl}, so can be ``$E$'' and ``$F$'' when we talk about their provability or computability.
 As pointed out in the proof of Theorem \ref{theo}, $F$ should satisfy one of the eight Conditions listed  in that proof.

If $F$ is $\bot$ or a nonlogical literal as in Condition 0, it is  invalid, and we let  DECISION  reject $E$.   

If $F$ is $\top$ as in Condition 1, then it is valid, and we let  DECISION  accept  $E$.

Assume $F$ is $A_0 \add^\mathfrak{c}A_1$   as in Condition 2. Let $D_{i}$ ($i\in\{0,1\}$) be the result of  replacing in $F$ all subcirquents of the form $X_0\add^\mathfrak{c} X_1$ by $X_i$.    DECISION recursively calls itself  on $D_0$ and then on $D_1$ to figure out whether these cirquents are valid/provable. The ranks of both $D_0$ and $D_1$ are  smaller than the rank of  $F$ and hence, in view of clause 4 of Lemma \ref{pl}, smaller than the rank of $E$, as promised in the first paragraph of the present proof. 
 If at least one of $D_0,D_1$ turns out to be provable,    we let   DECISION   accept $E$,    because $F$ follows from either cirquent by (one or more applications of) Chor-choosing. Otherwise, if    both $D_0,D_1$ turn out to be invalid,  we let  DECISION  reject $E$ because,    as (in fact) observed within Case 2 of the proof of Theorem \ref{theo}, if $F$ was valid, then so would be either $D_0$ or $D_1$.

For the subsequent cases, we merely state how   DECISION  acts. A verification of the adequacy of the corresponding acceptance/rejection decisions is left to  the reader.

Assume $F$ is $\ade^\mathfrak{c} xA(x)$ as in Condition 3. Let $\{a_1,\ldots,a_s\}$ be all constants occurring in $F$, and let $a_{s+1}$ be a constant not occurring in $F$.\footnote{In fact, there is no need for considering $a_{s+1}$ unless $s=0$, but let us be generous. The same comment applies to our treatment of the cases of $F$ being as in Condition 6 or Condition 7(2).}   DECISION runs itself on each of the  cirquents $A(a_1),\cdots, A(a_{s+1})$. If all of these are rejected, then DECISION rejects $E$, otherwise it accepts $E$. 

Assume $F= A_0 \adc^\mathfrak{c} A_1$ is as in Condition 4. DECISION  runs itself on $A_0$ and then on $A_1$. If both are accepted, DECISION accepts $E$, otherwise it rejects $E$.  

Assume $F=\ada^\mathfrak{c} xA(x)$ is as in Condition 5. Let $c$ be a constant not occurring in $F$. DECISION  calls itself on $A(c)$ and generates the same acceptance/rejection decision for $E$ as the call does for $A(c)$.    

Assume $F=A_1\mld\cdots\mld A_n$    is as in Condition 6.  Let   $B_{0}^{1}\add^{\mathfrak{b}_1} B_{1}^{1}$, \ldots, 
$B_{0}^{k}\add^{\mathfrak{b}_k} B_{1}^{k}$   and   $\ade^{\mathfrak{c}_1}x_1C_1(x_1)$, \ldots, $\ade^{\mathfrak{c}_m}x_mC_1(x_m) $  be as in Case 6 of the proof of Theorem \ref{theo}. Let $a_1,\ldots,a_s$ be all constants occurring in $F$, and let $a_{s+1}$ be a constant not found in $F$.
  For $i\in\{1,\cdots,k\}$ and $j\in\{0,1\}$, let $D_{i}^{j}$  be the result of replacing  in $F$  all subcirquents of the form $X_0\add^{\mathfrak{b}_i} X_1$ by $X_j$. Further, for $i\in\{1,\cdots,m\}$ and $j\in\{ 1,\cdots, s+1 \}$, let $G_{i}^{j} $ be the result of replacing  in  $F$  all subcirquents of the form $\ada^{\mathfrak{c}_i}yX (y) $ by $X(a_j)$. DECISION calls itself on each of the  
cirquents  $D_{1}^{0},  \cdots,  D_{k}^{0}$, $D_{1}^{1},  \cdots,  D_{k}^{ 1}$,  \  $G_{1}^{1}, \cdots, G_{m}^{ 1}$, \ $\cdots$, \ $G_{1}^{s+1}, \cdots, G_{m}^{s+1}$. 
If all calls reject their arguments, then   DECISION rejects $E$, otherwise it accepts $E$. 

Finally, assume $F=B_1\mlc\cdots\mlc B_m$ and $B_e$ are   as in   Condition 7. 

Assume $A_e$ is of the form $C_0\add^\mathfrak{c} C_1$ as in Condition 7(1).  For $i\in\{0,1\}$, let $D_i$   be the result of replacing  in $F$  all subcirquents of the form  $X_0\add^\mathfrak{c} X_1$ by $X_i$.  DECISION runs itself on $D_0$ and   $D_1$. If both cirquents are rejected, then DECISION rejects $E$, otherwise it accepts $E$.   

Assume $A_e$ is of the form  $\ade^\mathfrak{c} xC(x)$ as in Condition 7(2). Let $a_1,\ldots,a_s$ be all constants occurring in $F$, and let $a_{s+1}$ be a constant not found in $F$.  For each $j\in\{1,\ldots,s+1\}$, let $D_j$ be the result of replacing in $F$ all subcirquents of the form $\ade^\mathfrak{c}yX(y)$ by $X(a_j)$.  DECISION runs itself on each of the arguments $D_1,\ldots,D_{s+1}$. If all $s+1$ cirquents are rejected,   DECISION rejects $E$, otherwise it accepts $E$.   

Assume $A_e$ is as in Condition 7(3). With $A_e$ in the role of $F$,   DECISION acts exactly as in the above case of $F$ satisfying Condition 6.  
\end{proof}

\end{document}